\newtheorem{definition}{Definition}
\theoremstyle{remark}
\theoremstyle{proposition}
\newtheorem{remark}{Remark}
\newtheorem{proposition}{Proposition}
\theoremstyle{Assumption}
\theoremstyle{Lemma}
\definecolor{red}{rgb}{1,0,0} 
\definecolor{green}{rgb}{0,0.7,0}  
\definecolor{myblue}{RGB}{199,217,236}
\definecolor{mygrey}{RGB}{236,236,237}
\newcommand{\rev}[1]{{\color{blue}#1}} 
\newcommand{\newrev}[1]{{\color{red}#1}} 
\newcommand{\rev}[1]{#1}
\newcommand{\newrev}[1]{#1} 
\begin{document}

\title{RAISE: Optimizing RIS Placement to Maximize Task Throughput in Multi-Server Vehicular Edge Computing}


\author{Yanan Ma,~\IEEEmembership{Student Member, IEEE}, Zhengru Fang, Longzhi Yuan, Yiqin Deng,~\IEEEmembership{Member,~IEEE}, \\
Xianhao Chen,~\IEEEmembership{Member,~IEEE}, and Yuguang Fang,~\IEEEmembership{Fellow,~IEEE}
\thanks{Y. Ma, Z. Fang, L. Yuan, Y. Deng, and Y. Fang are with Hong Kong JC Lab of Smart City and the Department of Computer Science, City University of Hong Kong, Hong Kong, China. E-mail: \{yananma8-c, zhefang4-c\}@my.cityu.edu.hk, \{longyuan, yiqideng, my.fang\}@cityu.edu.hk.}

\thanks{X. Chen is with the Department of Electrical and Electronic Engineering, University of Hong Kong, Hong Kong, China. E-mail: xchen@eee.hku.hk.}
}



\maketitle

\begin{abstract}
Given the limited computing capabilities on autonomous vehicles, onboard processing of large volumes of latency-sensitive tasks presents significant challenges. While vehicular edge computing (VEC) has emerged as a solution, offloading data-intensive tasks to roadside servers or other vehicles is hindered by large obstacles like trucks/buses and the surge in service demands during rush hours. To address these challenges, Reconfigurable Intelligent Surface (RIS) can be leveraged to mitigate interference from ground signals and reach more edge servers by elevating RIS adaptively. To this end, we propose RAISE, an optimization framework for \ul{R}IS pl\ul{a}cement in mult\ul{i}-\ul{s}erver V\ul{E}C systems. Specifically, RAISE optimizes RIS altitude and tilt angle together with the optimal task assignment to maximize task throughput under deadline constraints. To find a solution, a two-layer optimization approach is proposed, where the inner layer exploits the unimodularity of the task assignment problem to derive the efficient optimal strategy while the outer layer develops a near-optimal hill climbing (HC) algorithm for RIS placement with low complexity. Extensive experiments demonstrate that the proposed RAISE framework consistently outperforms existing benchmarks.

\end{abstract}

\begin{IEEEkeywords}
Reconfigurable intelligent surface, vehicular edge computing, task offloading, RIS placement
\end{IEEEkeywords}

\section{Introduction}
\IEEEPARstart{C}{onnected} and autonomous driving (CAD) is expected to profoundly revolutionize the future of transportation by delivering safer, more efficient, and more comfortable driving experiences \cite{XianhaoVaaS}. To achieve this goal, autonomous vehicles (simply vehicles at later development) must gather huge amounts of real-time data from various onboard sensors and generate a vast number of computing tasks for smart driving and entertainment \cite{Dataoffloading}, \cite{PACP}, \cite{Zhengru}. However, processing such massive multi-modal tasks within sub-seconds, as demanded by time-sensitive vehicular applications, presents a major hurdle that precludes the widespread deployment of CAD. Given the substantial deployment costs of computing power, a cost-effective solution is to offload some computing tasks from vehicles to multi-access edge computing (MEC) systems at roadside or more powerful vehicles located nearby~\cite{AD}. With the assistance of MEC, the basic safety-critical tasks may still be processed onboard to ensure reliability and safety, whereas advanced driving tasks, such as trajectory planning, human-vehicle interactions, augmented reality (AR) navigation, and real-time interactive gaming, can be offloaded to MEC systems with more powerful computing/artificial intelligence (AI) capabilities. This strategy represents an emerging paradigm known as vehicular edge computing (VEC) \cite{VEC}.

VEC systems often handle \textit{data-intensive} and \textit{delay-sensitive} tasks, which require vehicles to establish fast communications with MEC servers that provide powerful computing capabilities~\cite{AD}. Unfortunately, fulfilling these task requirements is fraught with significant challenges. On the one hand, the absence of line-of-sight (LoS) channels results in low data rates and significant transmission latency between vehicles and roadside VEC servers. The relatively \textit{low altitude} of VEC servers, e.g., co-locating with base stations (BSs), access points (APs), and roadside units (RSUs), makes communication links vulnerable to physical blockages, including green belts, trucks, and buses. The situation is further exacerbated by the high mobility of vehicles, making vehicle-to-server connections highly unstable~\cite{ShenJSAC}. On the other hand, the high density of vehicles in populated areas, particularly during peak hours, causes a surge in data transmission and computation workload. The intensive communication-computation demands can easily overwhelm resource-limited roadside VEC servers, resulting in prolonged transmission and service waiting time.

\begin{figure}[t]
    \centering
    \begin{subfigure}[b]{0.24\textwidth}
    \includegraphics[width=\textwidth]{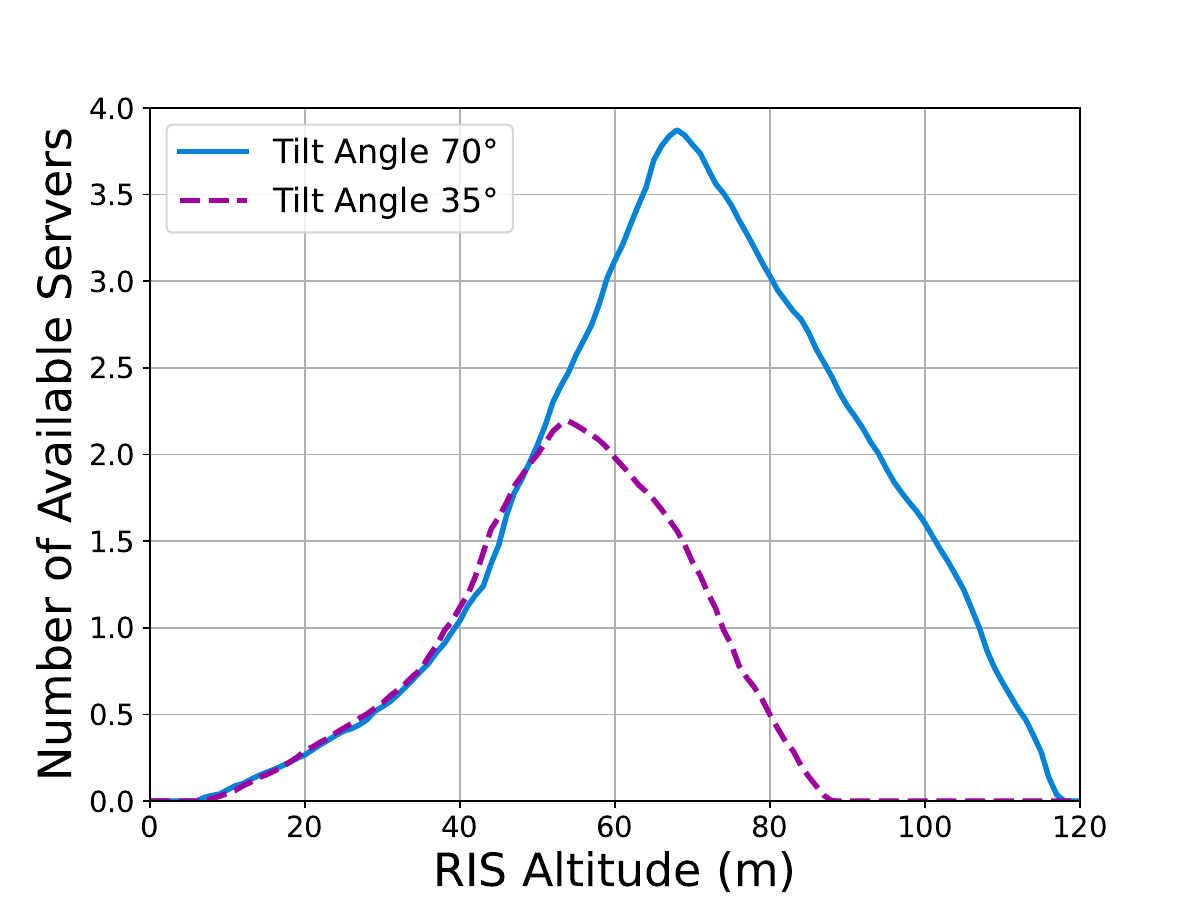}  
    \caption{Number of available servers versus RIS altitude.}
    \label{fig: intro_height}
\end{subfigure}
    \hfill
\begin{subfigure}[b]{0.24\textwidth}
    \includegraphics[width=\textwidth]{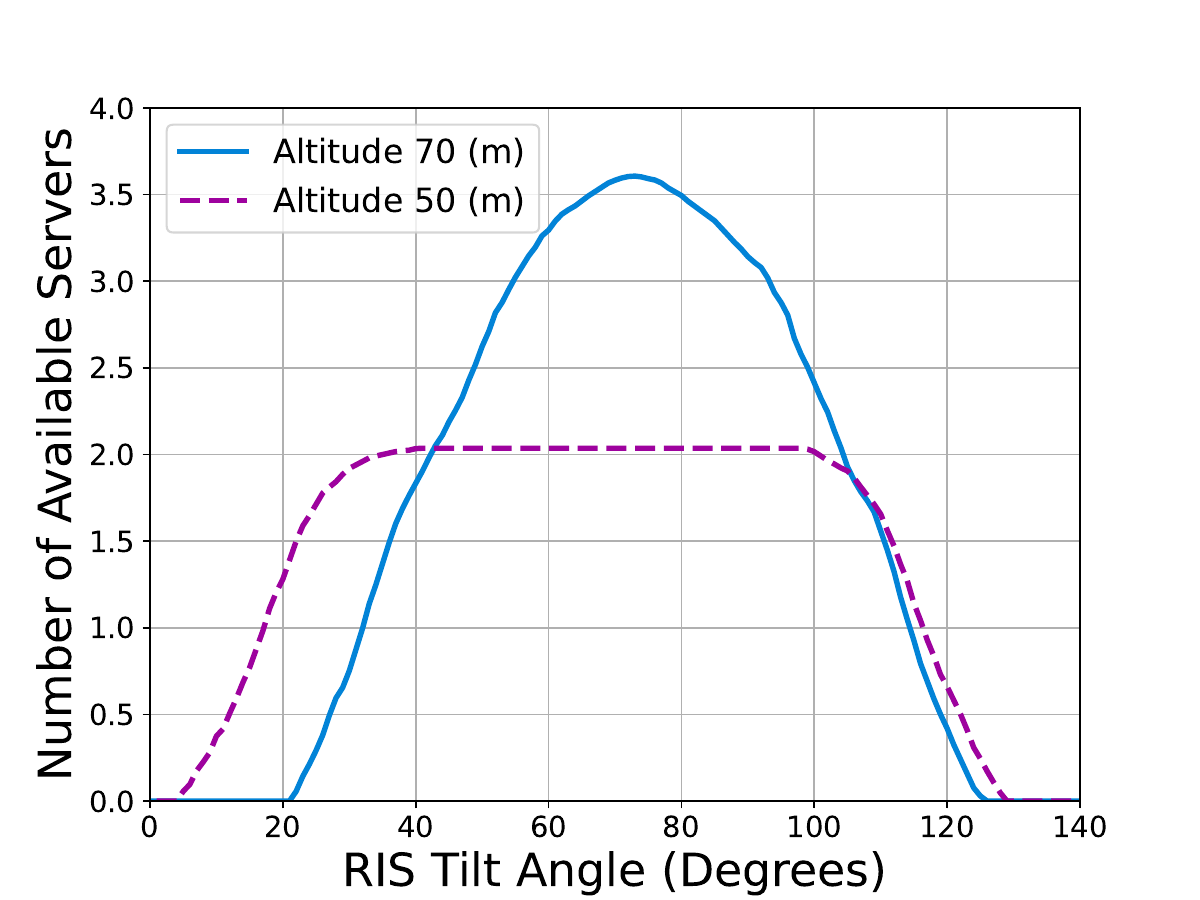}  
    \caption{Number of available servers versus RIS tilt angle.}
    \label{fig: intro_angle}
\end{subfigure}
    \caption{The impact of RIS placement on the average number of available servers for a vehicle. An available server refers to a server that can fulfill the offloaded tasks from the vehicle with a deadline requirement, which is set to $100$ ms.}
    \label{fig: intro}
\end{figure}



In recent years, reconfigurable intelligent surface (RIS), an intelligent surface capable of augmenting transmission environments, has attracted significant attention~\cite{Zhanglan, Qingqing, yanan, yananEE, ZengARIS, Tang, MM, Manifold}. Strategic deployment of RISs in VEC systems can effectively address the aforementioned challenges. First of all, elevated RISs, such as those mounted on lampposts or carried by unmanned aerial vehicles (UAVs), can establish relaying LoS links by reflecting signals from vehicles toward roadside servers \cite{ZengARIS}. Given the fact that direct links between vehicles and roadside servers are frequently obstructed in rapidly changing vehicular environments, these vehicle-RIS-server links can ensure high-speed and stable communications. 
Second, elevated RISs also facilitate vehicles to beam signals upwards, mitigating wireless interference caused by interlaced and congested electromagnetic (EM) signals close to the ground, thereby improving network-wide spectrum efficiency. Third, elevated RIS can help vehicles access more roadside servers by reflecting tasks to less loaded but potentially distant edge servers, thus achieving a better balance of communication and computation workloads in VEC systems \cite{Fang}. 
Fourth, while elevated relays can serve a similar function, RISs merely reflect signals without introducing significant processing delays or half-duplex transmission delays \cite{yiqin}, making them more suitable for time-critical VEC applications. \rev{Finally, RIS deployment is cost-effective due to its lightweight design, ease of installation on walls or racks, and energy efficiency. In particular, RIS operates passively, requiring minimal energy for the phase shift controller and control signaling exchanges without the need to perform user data transmissions~\cite{Tang}.}

Based on the aforementioned observations, in this paper, we investigate the joint problem of RIS positioning and task offloading in multi-server VEC systems. As depicted in Fig. \ref{fig: intro}, we consider a vehicle with a task that has a deadline requirement. The number of available servers (defined as the servers that can fulfill the task) initially increases with the RIS altitude, as a higher altitude provides a broader field of view. However, as altitude increases, communication link quality degrades due to path attenuation, resulting in prolonged communication delay. Thus, beyond a certain altitude, further increasing the altitude leads to a decline in the number of available servers that meet the overall task completion latency due to the prolonged communication delay. Besides, the RIS tilt angle requires careful adjustment because it should be precisely oriented toward the target areas to maximize connectivity. These observations emphasize the importance of carefully optimizing the RIS positioning, i.e., both the RIS altitude and tilt angle, to enhance the overall task computing performance and offer valuable insights into designing efficient and effective heuristic algorithms.

Our objective in this paper is to optimize both the positioning of a RIS and task assignments to maximize \textit{task throughput}, defined as the number of offloaded computing tasks that can be completed within their deadline constraints. A fundamental tradeoff in RIS placement involves balancing two conflicting factors of increasing the RIS altitude: \textit{a higher RIS enables vehicles to connect to more VEC servers by circumventing ground obstacles (hence occlusions), but it also results in higher path loss (low communication quality).} Moreover, RIS positioning is complicated by the task offloading strategy, as these two coupled decisions jointly determine the \textit{optimal performance} of this multi-server system. Since the joint optimization problem is highly intractable, \newrev{we search for the RIS placement decisions in the outer layer and optimize task offloading in the inner layer. In the outer layer, by analyzing the solution space, we developed a hill-climbing algorithm, yielding near-optimal solutions with low complexity. This method is based on our experimental insights and intuition: the task throughput, in terms of RIS placement with optimal task offloading decisions, tends to exhibit a hill-like shape. For task offloading in the inner layer, we demonstrate the unimodularity of the linear integer programming problem, showing that the task offloading problem is equivalent to its relaxed linear programming counterpart, which can be solved very efficiently. The efficiency of the inner layer makes it feasible to solve it in each iteration of the hill-climbing algorithm. The aforementioned design enables us to obtain near-optimal solutions of the very challenging optimization problem. The main contributions of this paper are summarized as follows.}
\begin{itemize}
    \item By considering a multi-server VEC system, we formulate a network-level optimization problem of task offloading and RIS placement, i.e., the altitude and tilt angle, to maximize the task throughput under probabilistic (soft) task deadline constraints. The formulation takes radiation characteristics of the RIS, probabilistic channel models, and mobility of vehicles into account.  

    \item We develop a two-layer optimization approach to find effective solution. In the task offloading problem, by demonstrating the unimodularity of the linear integer programming problem, we show that the problem is equivalent to the relaxed linear programming counterpart, leading to the optimal offloading solutions.
    



\item We employ a grid search procedure to optimally determine the RIS placement for an appropriate problem scale. We further design a near-optimal HC algorithm to optimize the RIS placement more efficiently.



    \item Finally, we conduct extensive simulation results to demonstrate that our low-complexity algorithm significantly outperforms other benchmarks. We also offer valuable insights on how to deploy RIS in VEC systems.

\end{itemize}

The rest of this paper is organized as follows. Section \ref{sec: related work} reviews related works on RIS-assisted MEC systems and RIS placement problems. Section \ref{sec:system} presents the system model. We formulate the RIS placement problem for task throughput maximization and provide the efficient optimization methods in Section \ref{sec: problem and alg}. In Section \ref{sec: simulation}, we provide extensive simulation results and discussions to demonstrate the effectiveness of our framework. Finally, we conclude this paper in Section \ref{sec:conclusion}.


\section{Related Work}
\label{sec: related work}

By harnessing the benefits of RISs, RIS-assisted MEC systems have garnered widespread attention. For instance, He \textit{et al.} propose balancing computing workloads by leveraging RIS to redirect computing tasks to less overloaded edge servers~\cite{He}. 
\rev{A joint optimization problem involving user association, passive beamforming at the RIS, receive beamforming at BSs, and computing resource allocation at servers is formulated to maximize task completion rates.}
Chu \textit{et al.} maximize the volume of accomplished tasks in RIS-assisted MEC systems by jointly optimizing computing resource, transmit power, time allocation, and RIS phase shifts~\cite{Chu}. 
Bai \textit{et al.} minimize the weighted sum latency in RIS-assisted MEC systems through joint optimization of offloading data size, computing resource, and both active and passive beamforming~\cite{Bai}. 
Yu \textit{et al.} address the joint optimization problem of hybrid beamforming at BS, passive beamforming at RIS, and computing resource for MEC systems to maximize computation efficiency \cite{Yu}. 
Shang \textit{et al.} introduce aerial RIS (ARIS) mounted on a UAV into MEC \rev{to enable three-dimensional signal reflections for uplink computation offloading. They focus on optimizing communication and computing resource allocation, ARIS trajectory, and the amplitudes and phase shifts of ARIS} \cite{poor}. 
In \cite{URIS}, Zhai \textit{et al.} propose a UAV-mounted RIS-assisted MEC system to enhance energy efficiency by jointly optimizing the passive beamforming at RIS, UAV trajectories, and MEC computing resource. However, the altitude and tilt angle of the RIS have not been optimized. Instead of solely using a RIS to enhance link quality, Zhang \textit{et al.} propose to use a novel Reconfigurable Intelligent Computational Surface (RICS) to improve computing by dynamically adjusting the incident signal’s amplitude, thereby mitigating interference at the receiver in vehicle-to-vehicle communication \cite{RICS}, which aims to improve autonomous driving by optimizing task offloading ratios, spectrum sharing strategies, and the RICS reflection and refraction matrices. In addition, several studies have explored RIS-assisted secure computation offloading and resource management for MEC systems \cite{Mao, Michailidis}.


\begin{table}[!t]
\centering
\caption{\rev{Summary of related works on RIS-assisted systems}}
\label{table_comp}
\renewcommand{\arraystretch}{1.4}
\setlength{\tabcolsep}{2mm}
\rev{
\begin{tabular}{|c|c|c|c|c|}
\hline
\makecell[c]{\textbf{Ref.}} & 
\makecell[c]{\textbf{MEC} \\ \textbf{Systems}}& 
\makecell[c]{\textbf{Multiple} \\ \textbf{Servers}}& 
\makecell[c]{\textbf{RIS Placement} \\ \textbf{(Altitude)}}& 
\makecell[c]{\textbf{RIS Placement} \\ \textbf{(Tilt Angle)}}   
  \\ \hline
\cite{He} 
    & {\ding{52}}  & {\ding{52}}  & {\ding{55}}  & {\ding{55}}   \\ \hline
\cite{Chu} 
    & {\ding{52}}  & {\ding{55}}  & {\ding{55}}  & {\ding{55}}   \\ \hline 
\cite{Bai} 
    & {\ding{52}}  & {\ding{55}}  & {\ding{55}} & {\ding{55}}   \\ \hline
\cite{Yu} 
    & {\ding{52}}  & {\ding{55}}  & {\ding{55}} & {\ding{55}}   \\ \hline
\cite{poor} 
    & {\ding{52}}  & {\ding{55}}  & {\ding{55}} & {\ding{55}}   \\ \hline
\cite{URIS} 
    & {\ding{52}}   & {\ding{55}}  & {\ding{55}}   & {\ding{55}} \\ \hline
\cite{Boya} 

    & {\ding{55}}    & {\ding{55}}  & {\ding{55}}     & {\ding{52}}  \\ \hline
\cite{YuyinTVT}
    & {\ding{55}}    & {\ding{55}}  & {\ding{55}}     & {\ding{52}}  \\ \hline
\cite{ChengTWC} 
    & {\ding{55}}  & {\ding{55}}   &  {\ding{52}} & {\ding{52}} \\ \hline
\cite{Xiaowen} 
    & {\ding{55}}  & {\ding{55}}   &  {\ding{52}} & {\ding{52}} \\ \hline
Ours 
    & {\ding{52}}  & {\ding{52}}  & {\ding{52}}   & {\ding{52}}    
     \\ \hline
\end{tabular}
}
\end{table}

Although RIS placement plays a pivotal role, very limited work has been devoted to optimizing RIS placement in MEC systems, with most of them still focusing on purely enhancing communication rather than MEC computing~\cite{He}. Zeng \textit{et al.} optimize RIS orientation and horizontal distance by addressing the RIS placement optimization problem for coverage maximization~\cite{Boya}. A multi-RIS location optimization scheme has been developed in \cite{YuyinTVT} to enhance coverage while reducing network costs. To take advantage of both relay and RIS, Bie \textit{et al.} derive a closed-form upper bound of the achievable rate and the optimal RIS deployment \cite{Bie}. Cheng \textit{et al.} further investigate the RIS placement problem by considering its radiation characteristics, showing that rotating RIS is more effective than moving it across a large area \cite{ChengTWC}. Tian \textit{et al.} determine the optimal RIS location, altitude, and tilt angle through the numerical search for an mmWave vehicular communication system by considering near-field beamforming and the distribution of users and obstacles \cite{Xiaowen}.


However, the aforementioned works cannot be applied to our RIS placement problem in either MEC or VEC systems for the following reasons. First, our RIS placement aims to maximize the number of completed computing tasks (called task throughput) within their specified deadline requirements under computing resource availability, which is fundamentally different from traditional rate-centric RIS placement designed to maximize the total data rate or coverage. This distinction is particularly important in multi-server systems, where the joint design of task offloading and placement must be addressed to achieve high task throughput. Second, the aforementioned works fail to consider the dynamic mobility of vehicles, which also impacts the optimal RIS position. To bridge these gaps, this paper investigates an optimal placement of RIS to balance communication and computing workloads in VEC systems. \rev{To compare our work and related works, we provide a summary table in Table \ref{table_comp}.}


\section{System Model}
\label{sec:system}
In this section, we describe our RIS-assisted VEC system, vehicle mobility model, probabilistic communication model, and task latency model by considering the effect of RIS placement.

\subsection{RIS-assisted VEC Systems}

\begin{figure}
    \centering
    \includegraphics[width=1\linewidth]{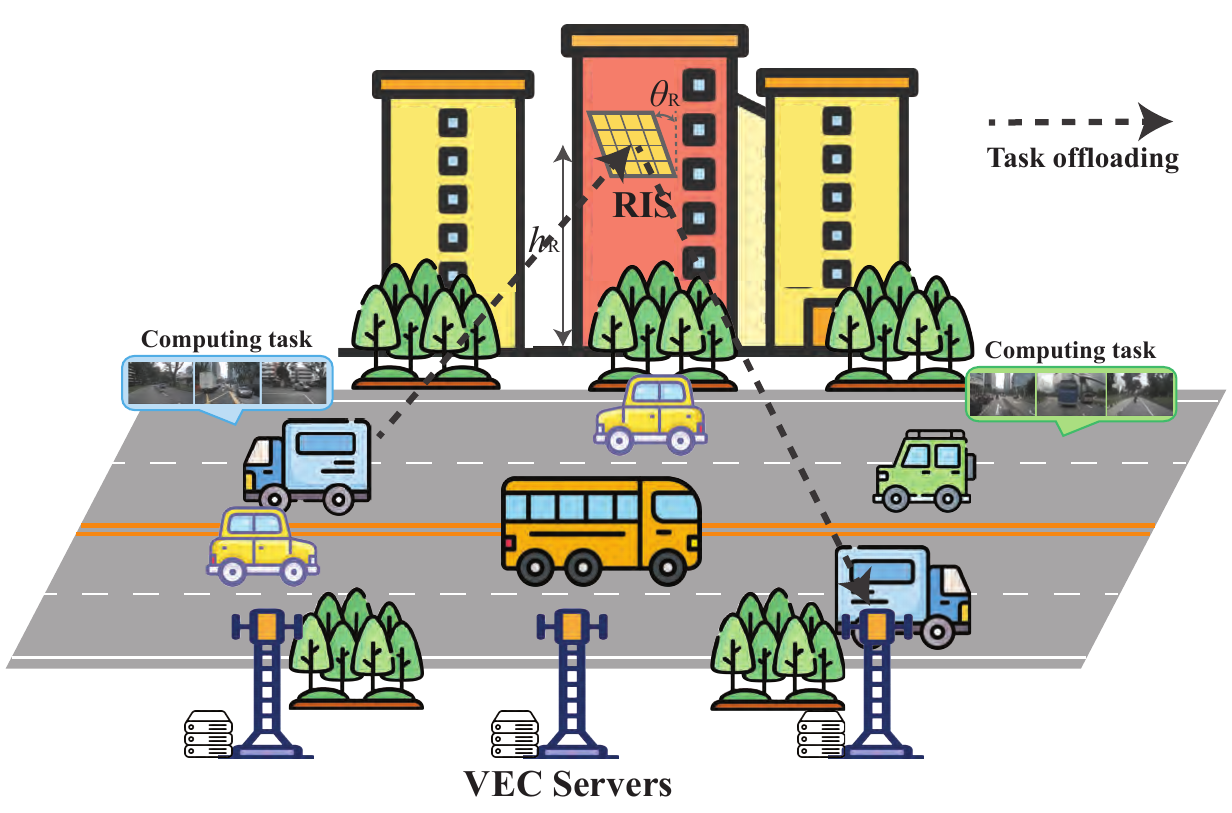}
    \caption{An elevated RIS-assisted multi-server VEC system. Multiple vehicles generate computing tasks while moving through the area. The RIS is elevated and tilted down to facilitate task offloading by enhancing the reachability to more computing servers.}
    \label{fig: system}
\end{figure}

We consider a RIS-assisted VEC system, as illustrated in Fig. \ref{fig: system}, where a RIS can be either installed on building facades or lampposts to facilitate task offloading for vehicles. \rev{To enable network control, RIS can be equipped with a transceiver for channel estimation and control signaling exchange, thereby facilitating real-time adaptation of its adjustable variables like phase shifts~\cite{Qingqing, Tang}. To manage the system, we assume there is a central controller, say, co-located with a base station for network coordination. Since the 5G/6G network can adopt a separate control and data plane following the principle of control/user plane decoupling, the central controller can collect network information and deliver control commands through dedicated control channels to inform the policies to the RIS controller.}

The position of the RIS is given by $(x_{\mathrm{R}}, y_{\mathrm{R}}, h_{\mathrm{R}})$, where $x_{\mathrm{R}}$ and $y_{\mathrm{R}}$ on the horizon are both pre-determined, while $h_{\mathrm{R}}$, representing the altitude  or height, should be optimized. The RIS is composed of $M_r \times M_l$ reconfigurable elements, where $M_r$ and $M_l$ denote the numbers of rows and columns of regularly spaced elements. Moreover, the RIS is tilted downwards with a tilt angle $\theta_{\mathrm{R}}$. Since the placement of the RIS has a long-term impact on system performance, we treat it as a strategic, long-term decision. 
Initially, we consider a single instance for the sake of clarity. Then, we will address the long-term optimization of RIS placement across multiple instances. 




Let $\mathcal{K}=\{1, 2, \dots, K\}$ represent the set of vehicles. The initial location of vehicle $k$ is $(x_k[1], y_k[1], z_k)$, which can be determined when it submits its task offloading request. The available VEC servers are denoted by $\mathcal{S}=\{1, 2, \dots, S\}$, with the location of server $s$ being $(x_s, y_s, z_s)$. Depending on task computing requirements, RIS altitude, and tilt angle, not all VEC servers can be used for task computing, and thus, $S$ can be considered to be the maximum number of VEC servers that can be put into use at the point of interest (e.g., the spot that the RIS is located at). Each vehicle generates one indivisible computing task.
Let $\lambda_{k,s}$ indicate the association between vehicle $k$ and VEC server $s$, where $\lambda_{k,s} = 1$ represents that vehicle $k$ offloads its computing task to server $s$, and $\lambda_{k,s} = 0$ otherwise. Assuming each task is either executed locally or offloaded to at most one VEC server, we have
\begin{equation}
    \sum_{s=1}^S \lambda_{k,s} \leq 1, ~\forall k \in \mathcal{K}.
\end{equation}

We characterize the task of vehicle $k$ using a tuple $\mathcal{T}_k(D_k, F_k, T^{th}_k)$, where $D_k$ is the input data size (in bits), $F_k$ is the computing workload (in floating point operations or FLOPs) per bit, and $T^{th}_k$ specifies the maximum tolerable delay (in seconds). These parameters can be determined by monitoring task execution and active reporting from vehicles when submitting their offloading requests \cite{3tuple}. To enable multiple access, Orthogonal Frequency Division Multiplexing Access (OFDMA) is adopted, with a fixed channel allocated to each vehicle~\cite{MECYuanwei}\footnote{This paper focuses on RIS placement in VEC systems. Spectrum resource allocation is beyond the scope of this paper, which can be explored in future studies.}.

\subsection{Vehicle Mobility Model}

Due to the high mobility of vehicles, their locations continuously change upon offloading computing tasks to VEC servers, causing data rate changes. To facilitate the analysis, we discretize the region into grids to represent vehicles' locations \rev{\cite{discrete}}, as shown in Fig. \ref{fig: mobility}, assuming that each vehicle's data rate remains unchanged within a grid but potentially varies across different grids. Supposing that each vehicle moves at a constant speed~\cite{speed1}, \cite{speed2}, the sojourn time $t_k$ for vehicle $k$ in a given grid can be calculated as
\begin{equation}
    t_k = \frac{\Delta_d}{v_k},
\end{equation}
where $\Delta_d$ represents the length of the grid and $v_k$ is the velocity of vehicle $k$\footnote{\rev{Here, we assume a constant velocity in this analysis for simplicity. However, the proposed framework can be easily generalized to time-varying velocity scenarios.}}. This approach can be easily extended to other mobility models, such as the widely used Intelligent Driver Model (IDM)~\cite{mobilitymodel}, the Gauss-Markov mobility model~\cite{mobility}, and the Krauss model \cite{Krauss}. Simulation results demonstrate that the proposed algorithm remains effective in practical vehicular flow scenarios.

\begin{figure}
    \centering
    \includegraphics[width=0.9\linewidth]{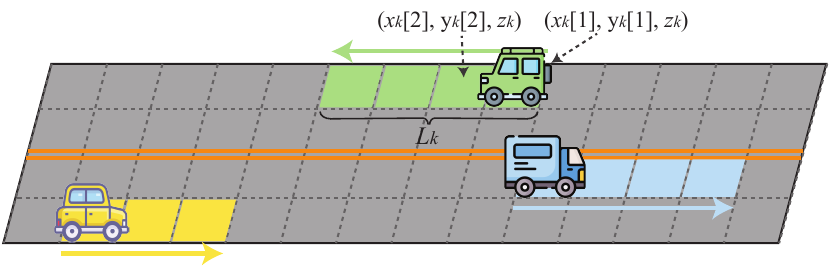}
    \caption{Illustration of vehicular mobility. The region is divided into multiple grids to represent vehicles' locations.}
    \label{fig: mobility}
\end{figure}

\subsection{Communication Model}
The initial position of vehicle $k$ is at $(x_k[1], y_k[1], z_k)$, and as it moves, its subsequent position in the $l$-th grid along its trajectory is denoted by $(x_k[l], y_k[l], z_k)$, as illustrated in Fig. \ref{fig: mobility}. Given the scattered road environments, we assume that both the transmitting and receiving beams of vehicles and servers are aligned with the RIS. In other words, vehicles offload their computing tasks through the strategically placed RIS via the cascaded vehicle-RIS-server communication channel.

We adopt a statistical model for the vehicle-RIS and RIS-server channels. Since shadowing and scattering are prevalent due to dense buildings in urban environments, we define a binary state \( a_{\mathrm{R}, k}[l] \) to indicate whether the link between the RIS and vehicle \( k \) in the \( l \)-th grid is LoS or non-line-of-sight (NLoS) conditions, as widely adopted in previous studies~\cite{saad, yiqin, you}.  Specifically, $a_{\mathrm{R}, k}[l] = 1$ indicates the LoS state, whereas $a_{\mathrm{R}, k}[l] = 0$ corresponds to the NLoS state.  \( a_{\mathrm{R}, k}[l] \) is influenced by factors such as the propagation environment, vehicle density, and the location of the RIS, which can be commonly approximated as \cite{LAP}
\begin{equation}
    \begin{aligned} \label{eq:h_prob}
        \mathbb{P}(a_{\mathrm{R}, k}[l] = 1) = \frac{1}{1+A_1 \exp(-A_2(\theta_{\mathrm{R}, k}[l]-A_1))},
    \end{aligned}
\end{equation}
where $A_1$ and $A_2$ are constant parameters that depend on the characteristics of environments \cite{you}, \cite{ITU}, and $\theta_{\mathrm{R}, k}[l]$ is the elevation angle between vehicle $k$ and RIS, which can be expressed as
\begin{equation}\label{eq:theta}
    \theta_{\mathrm{R}, k}[l] = \frac{180}{\pi} \times \arctan\left(\frac{h_{\mathrm{R}}-z_k}{d_{\mathrm{R}, k}[l]}\right),
\end{equation}
and $d_{\mathrm{R}, k}[l] \triangleq \sqrt{(x_{\mathrm{R}}-x_k[l])^2+(y_{\mathrm{R}}-y_k[l])^2}$ represents the horizontal distance between them.
The corresponding NLoS probability can be obtained as $\mathbb{P}(a_{\mathrm{R}, k}[l]= 0) = 1-\mathbb{P}(a_{\mathrm{R}, k}[l] = 1)$. Depending on the LoS or NLoS link conditions, the power of the incident signal from vehicle $k$ to RIS can be expressed as
\begin{equation}
    \begin{aligned}
        P_{k}^{in}[l]=a_{\mathrm{R}, k}[l] P_{k}^{\mathrm{L}, in}[l]+(1-a_{\mathrm{R}, k}[l])P_{k}^{\mathrm{N}, in}[l],
    \end{aligned}
\end{equation}
where $P_{k}^{\mathrm{L}, in}[l]$ represents the incident signal power for the LoS state, and $P_{k}^{\mathrm{N}, in}[l] = \xi_{k}P_{k}^{\mathrm{L}, in}[l]$ is the received signal power for the NLoS link, with $\xi_{k}<1$ denoting additional signal attenuation factor accounting for the NLoS propagation~\cite{LAP, yiqin, you}. 

\begin{figure}
    \centering
    \includegraphics[width=1.0\linewidth]{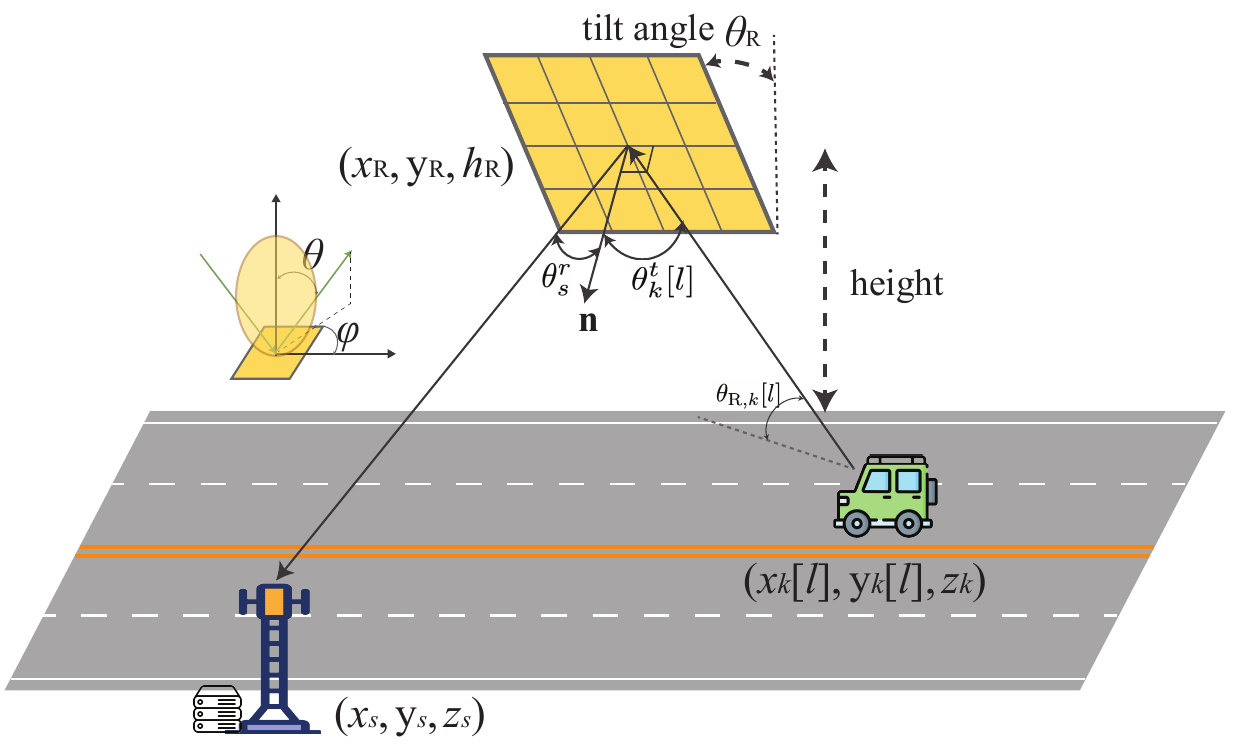}
    \caption{Illustration of a RIS tilted downward to better serve vehicles on the ground.}
    \label{fig: pattern}
\end{figure}

Similarly, we define $c_{\mathrm{R}, s}=1$ and $c_{\mathrm{R}, s}=0$ to represent the binary LoS or NLoS state of the link between RIS and VEC server $s$. The probabilities of the LoS and NLoS channel states, i.e., $\mathbb{P}(c_{\mathrm{R}, s} = 1)$ and $\mathbb{P}(c_{\mathrm{R}, s}=0)$, can be calculated through a similar manner in equations \eqref{eq:h_prob} and \eqref{eq:theta}. Based on this, the received signal power at the VEC server $s$ from vehicle $k$ via the RIS reflection can be given by
\begin{equation}
    \begin{aligned} \label{eq: p_k,s}
        P_{k, s}^{rcv}[l] = c_{\mathrm{R}, s} P_{k, s}^{\mathrm{L}, rcv}[l]+(1-c_{\mathrm{R}, s})P_{k, s}^{\mathrm{N}, rcv}[l],
    \end{aligned}
\end{equation}
where $P_{k, s}^{\mathrm{L}, rcv}[l]$ represents the received power for the LoS connection, and $P_{k, s}^{\mathrm{N}, rcv}[l] = \xi_s P_{k, s}^{\mathrm{L}, rcv}[l]$ is the received power for the NLoS link, with $\xi_s<1$ being the additional signal attenuation factor. 

Considering the cascade vehicle-RIS-server channel, the overall received signal power at server $s$ from vehicle $k$ at the $l$-th grid can be expressed as
\begin{equation}
    \begin{aligned}
        P_{k, s}^{rcv}[l] = \omega_{k,s}[l]\Bar{P}_{k, s}^{rcv}[l],
    \end{aligned}
\end{equation}
where 
\begin{equation}
    \begin{aligned}
        \omega_{k,s}[l]= &a_{\mathrm{R}, k}[l]c_{\mathrm{R}, s}+\xi_k\xi_s(1-a_{\mathrm{R}, k}[l])(1-c_{\mathrm{R}, s})\\
        &+\xi_k(1-a_{\mathrm{R}, k}[l])c_{\mathrm{R}, s} +\xi_s a_{\mathrm{R}, k}[l](1-c_{\mathrm{R}, s}),
    \end{aligned}
\end{equation}
represents the combined LoS and NLoS channel states, and $\Bar{P}_{k, s}^{rcv}[l]$ is the received signal power conditioned on that both the vehicle-RIS link and RIS-server link are LoS, the expression of which will be provided below.

To derive $\Bar{P}_{k, s}^{rcv}[l]$, we first define the normalized power radiation pattern of each RIS element below to measure the RIS gain~\cite{radiation}
\begin{equation}
    F(\theta, \varphi)= \begin{cases}\cos ^3 \theta & \theta \in\left[0, \frac{\pi}{2}\right], \varphi \in[0,2 \pi], \\ 0 & \theta \in\left(\frac{\pi}{2}, \pi\right], \varphi \in[0,2 \pi],\end{cases}
\end{equation}
where $\theta$ and $\varphi$ are the elevation and azimuth angles from the RIS element to a specific transmitting or receiving direction, as shown in Fig. \ref{fig: pattern}. Note that the normalized power radiation pattern is a function of the elevation angle, with the maximum gain in the $\theta = 0$ direction. Let $G_t$ and $G_r$ represent the antenna gains at vehicles and servers, $b$ and $d$ denote the length and width of each RIS element, respectively. 
Without loss of generality, we assume that the peak radiation directions of the transmitting and receiving antenna are aligned with the center of the RIS, and all RIS elements share the same reflection coefficient, i.e., $e^{j\psi} = e^{j\psi_{m}}, \forall{m}=1,\ldots, M_r\times M_l$. The overall received signal power at server $s$ from vehicle $k$ at the $l$-th grid via RIS reflection can be expressed as \cite{Tang}
\begin{equation}
\begin{aligned}
\Bar{P}_{k, s}^{rcv}[l] &=  \frac{P_tG_t G_r G  M_r^2  M_l^2 bd\lambda^2 F(\theta_{k}^t[l], \varphi_{k}^t[l])F\left(\theta_{s}^r, \varphi_{s}^r\right)}{64 \pi^3(d_{k}[l])^\alpha(d_{s})^\alpha} \\
& \times \left\lvert\, \frac{\operatorname{sinc}\left(\frac{M_r \pi}{\lambda}\left(\sin \theta_{k}^t[l] \cos \varphi_{k}^t[l]+\sin \theta_{s}^r\cos \varphi_{s}^r\right) b\right)}{\operatorname{sinc}\left(\frac{\pi}{\lambda}\left(\sin \theta_{k}^t[l] \cos \varphi_{k}^t[l]+\sin \theta_{s}^r \cos \varphi_{s}^r\right) b\right)}\right. \\
& \times\left.\frac{\operatorname{sinc}\left(\frac{M_l \pi}{\lambda}\left(\sin \theta_{k}^t[l] \sin \varphi_{k}^t[l]+\sin \theta_{s}^r\sin \varphi_{s}^r\right) d\right)}{\operatorname{sinc}\left(\frac{\pi}{\lambda}\left(\sin \theta_{k}^t[l] \sin \varphi_{k}^t[l] +\sin \theta_{s}^r \sin \varphi_{s}^r\right) d\right)}\right|^2,
\end{aligned}
\end{equation}
where $P_t$ is the transmit power of the vehicle, $G$ is the gain of the unit RIS element, $\lambda$ is the wavelength, $\alpha$ denotes the average path loss exponent, $d_{k}[l]=\sqrt{(x_\mathrm{R}-x_k[l])^2+(y_\mathrm{R}-y_k[l])^2+(h_\mathrm{R}-z_k)^2}$ is the distance between vehicle $k$ and the RIS, $d_{s}=\sqrt{(x_\mathrm{R}-x_s)^2+(y_\mathrm{R}-y_s)^2+(h_\mathrm{R}-z_s)^2}$ is the distance between RIS and server $s$, and angles $\theta_{k}^t[l], \theta_s^r, \varphi_{k}^t[l]$, and $\varphi_{s}^r$ are given, respectively, by
\begin{equation}
    \begin{aligned}
        \theta_{k}^t[l]  &= \arccos \left( \frac{\cos\theta_{\mathrm{R}}  |y_\mathrm{R}-y_k[l]| + \sin\theta_{\mathrm{R}} |h_\mathrm{R}-z_k|}{d_{k}[l]} \right),\\
        \theta_s^r  &= \arccos \left( \frac{\cos\theta_{\mathrm{R}} |y_\mathrm{R}-y_s| + \sin\theta_{\mathrm{R}} |h_\mathrm{R}-z_s|}{d_{k}[l]} \right),\\
        \varphi_{k}^t[l] &= \arccos \left( \frac{x_k[l] - x_\mathrm{R}}{\sqrt{(x_k[l] - x_\mathrm{R})^2 + \left( \Tilde{d}_1 \right)^2 + \left( \Tilde{d}_2 \right)^2}} \right),\\
        \varphi_{s}^r &= \arccos \left( \frac{x_s - x_\mathrm{R}}{\sqrt{(x_s - x_\mathrm{R})^2 + \left( \Tilde{d}_3 \right)^2 + \left( \Tilde{d}_4 \right)^2}} \right),
    \end{aligned}
\end{equation}
with
\begin{equation}
    \begin{aligned}
        \Tilde{d}_1 =& (y_k[l] - y_\mathrm{R}) - (y_k[l] - y_\mathrm{R}) \cos^2\theta_{\mathrm{R}} \\
        &- (z_k - h_\mathrm{R}) \cos\theta_{\mathrm{R}} \sin\theta_{\mathrm{R}},\\
        \Tilde{d}_2 =& (z_k - h_\mathrm{R}) - (y_k[l] - h_\mathrm{R}) \cos\theta_{\mathrm{R}} \sin\theta_{\mathrm{R}} \\
        &- (z_k - h_\mathrm{R}) \sin^2\theta_{\mathrm{R}},\\
        \Tilde{d}_3 =& (y_s - y_\mathrm{R}) - (y_s - y_\mathrm{R}) \cos^2\theta_{\mathrm{R}} \\
        &- (z_s - h_\mathrm{R}) \cos\theta_{\mathrm{R}} \sin\theta_{\mathrm{R}},\\
        \Tilde{d}_4 =& (z_s - h_\mathrm{R}) - (y_s - h_\mathrm{R}) \cos\theta_{\mathrm{R}} \sin\theta_{\mathrm{R}} \\
        &- (z_s - h_\mathrm{R}) \sin^2\theta_{\mathrm{R}}.
    \end{aligned}
\end{equation}


Based on the channel model above, the data rate from vehicle \( k \) in the \( l \)-th grid to server \( s \), i.e., the ``link capacity" \( \Tilde{R}_{k,s}[l]\), can be calculated through
\begin{equation}
    \Tilde{R}_{k,s}[l] = B\log_2\left(1+\frac{P_{k,s}^{rcv}[l]}{n_0}\right),
\end{equation}
where $B$ is the bandwidth of each channel and $n_0$ is the noise power. Since OFDMA is employed, interference from other vehicles/tasks is not considered herein.

\subsection{Latency Model}
When a vehicle offloads its computing task to a VEC server, the server processes this task and returns the results to the vehicle. Hence, the end-to-end latency consists of the following three components: 1) the communication latency, \( t^{\mathrm{comm}} \), which is the time to transmit the computing task to the server via the RIS-assisted wireless link, 2) the computing latency, \( t^{\mathrm{comp}} \), which is the time to execute the task at the server, and 3) the result transmission latency, \( t^{\mathrm{back}} \), which is the time to transmit the result back to the vehicle via RIS reflection. Since the size of the task execution results is typically small, we follow the common assumption in the literature \cite{MEClatency}, \cite{Cuilatency}, \cite{Zhanglatency} and ignore this part in this paper.

\textbf{1) Computing Latency:}
First, we analyze the required time for task execution. Since servers are often equipped with high-performance GPUs featuring multiple cores or supporting multiple virtual machines (VMs), simultaneous multi-task processing can be conducted~\cite{Huangdelay}. This parallel processing capability significantly enhances the task execution efficiency in VEC systems. Consequently, queuing delays are negligible, and the computing latency for vehicle $k$ can be calculated from
\begin{equation}
    \begin{aligned}
        t^{\mathrm{comp}}_{k} =\frac{D_kF_k}{f},
    \end{aligned}
\end{equation}
where $f$ represents the computing resource allocated to each task, i.e., the number of floating-point operations per second (FLOPS). Without loss of generality, we assume that each task is assigned the same amount of FLOPS\footnote{The optimization of computing resource allocation can be explored in our future work, which is out of the scope of this paper.}.

Due to the limited communication-computing resources, the number of tasks that can be processed simultaneously on a server is subject to
\begin{equation}
    \sum_{k=1}^{K} \lambda_{k, s} \leq C_s, ~\forall s\in \mathcal{S},
\end{equation}
where $C_s=\max\{C_{\mathrm{comp},s}, C_{\mathrm{comm},s}\}$ represents the maximum number of tasks server $s$ can accommodate, with $C_{\mathrm{comp},s}$ denoting the maximum number of computing tasks one server can execute concurrently and $C_{\mathrm{comm},s}$ being the maximum number of communication channels available on server $s$\footnote{\rev{
In this paper, we adopt the OFDMA method, assuming each vehicle is assigned with the same bandwidth (i.e., the same number of subcarriers in OFDMA). Each edge server is allocated with different frequency bands, which therefore does not interfere with each other. This imposes an upper limit on the number of vehicles it can support simultaneously on each edge server: The number of accommodated vehicles cannot exceed the maximum number of available channels on each edge server.
}}.

\textbf{2) Communication Latency:} To ensure the end-to-end latency remains within the tolerable delay, the maximum communication time for uploading a task is
\begin{equation}
    t_{k, \max}^{\mathrm{comm}}= T^{th}_k - t_{k}^\mathrm{comp}, ~\forall k \in \mathcal{K}.
\end{equation}
During this time, vehicle \(k\) can move across up to \(L_k\) complete grids, as shown in Fig. \ref{fig: mobility}, which can be calculated by
\begin{equation}
    L_k = \left\lfloor \frac{t_{k, \max}^{\mathrm{comm}}}{t_k} \right\rfloor,
\end{equation}
where $\lfloor\cdot \rfloor$ denotes the floor operation. Therefore, the maximum amount of data that can be uploaded from vehicle $k$ to server $s$, the summation of the product of link capacity and sojourn time across \(L_k\) grids, is given by
\begin{equation}
    D_{k,s} = \sum_{l=1}^{L_k}t_kB\log_2\left(1+\frac{P_{k,s}^{rcv}[l]}{n_0}\right).
\end{equation}
where $t_k$ is the sojourn time vehicle $k$ in each grid.

To ensure that the task of vehicle $k$ can be successfully offloaded and processed within the required time, the total amount of data that can be supported for uploading, i.e., \( D_{k,s} \), must be no less than the maximum amount \( D_k \), i.e.,
\begin{equation}
    D_{k,s}\geq D_k. \label{eq: data}
\end{equation}
Otherwise, the task cannot be processed within the allowable time. Since \( D_{k,s} \) is a random variable influenced by the randomness of $P_{k,s}^{rcv}[l]$, we count a task to be successful if the probability that the task completion exceeds a certain threshold, i.e.,
\begin{equation}
   \mathbb{P}[D_{k,s}\geq D_k]\geq \eta,  ~\forall k\in \mathcal{K}, ~\forall s\in \mathcal{S}, \label{prob}
\end{equation}
where $\eta$ is the minimum acceptable task completion probability. 
We require that the task assignment decision meet
\begin{equation}
    \begin{aligned}
        \lambda_{k, s}\leq &\mathds{1}{\left(\mathbb{P}\left[D_{k,s}\geq D_k\right] \geq\eta \right)},\\
        &\forall k\in \mathcal{K},~ \forall s\in \mathcal{S}, 
    \end{aligned}
\end{equation}
where $\mathds{1}{(\cdot)}$ is the indicator function. In other words, the task of vehicle $k$ can be assigned to VEC server $s$ only if the task completion probability exceeds the threshold $\eta$ to avoid resource wastage due to task failure. Our design goal, hence, is to maximize the number of completed tasks satisfying the probabilistic constraint (\ref{prob}). Our system model underscores the importance of optimizing RIS placement, as the task completion probability in (\ref{prob}) depends on both task assignment and communication channels, both of which are heavily influenced by the RIS's altitude and tilt angle.




\section{Task Throughput Maximization and Proposed Algorithm}
\label{sec: problem and alg}
In this section, we first present an optimization problem aiming at maximizing the average task throughput by jointly optimizing RIS placement and task offloading strategy. 
\newrev{Next, we provide our two-layer optimization framework, i.e., optimizing task offloading in the inner layer and RIS positioning in the outer layer.}
Finally, we analyze the complexity of the proposed algorithm.

\subsection{Problem Formulation}
Considering the RIS-assisted VEC system in Section \ref{sec:system}, we aim to maximize the task throughput, \rev{i.e., the number of computing tasks that can be offloaded and successfully computed.} 
To achieve this, we propose to jointly optimize the task offloading strategy \(\lambda_{k,s}\), the RIS altitude \(h_{\mathrm{R}}\), and tilt angle \(\theta_{\mathrm{R}}\). 
To optimize the long-term performance, we need to design RIS placement for improving the average task throughput across all instances. Let $\mathcal{N}=\{1, 2, \dots, N\}$ be the sequence of instances, $\mathcal{K}{[n]}=\{1, 2, \dots, K{[n]}\}$ denote the set of vehicles in instance $n$, and the vector \( \boldsymbol{\lambda}{[n]}\triangleq [\lambda_{1,1}[n], \ldots, \lambda_{K[n],S}[n]]^T\) represent the task assignment decisions during instance $n$. The corresponding optimization problem can be formulated as follows
\begin{subequations} \label{p:original}
    \begin{align}
        \max_{\boldsymbol{\lambda}{[n]}, h_{\mathrm{R}}, \theta_{\mathrm{R}}} &\frac{1}{N}\sum_{n=1}^N\sum_{k=1}^{K{[n]}}\sum_{s=1}^S \lambda_{k,s}[n] \label{eq: p1_obj}\\
        \mathrm{s. t.}~~~ &\sum_{s=1}^S \lambda_{k, s}[n] \leq 1, ~\forall k\in \mathcal{K}{[n]},~\forall n\in\mathcal{N}, \label{eq: p1_task}\\
        &\sum_{k=1}^{K{[n]}} \lambda_{k, s}[n] \leq C_s, ~\forall s\in \mathcal{S}, ~\forall n\in\mathcal{N},\label{eq: p1_server}\\
        & \lambda_{k, s}[n] \leq \mathds{1}{\left(\mathbb{P}\left[D_{k,s}[n]\geq D_k\right] \geq\eta \right)}, \nonumber\\
        &~~~~~~~~~~~~\forall k\in \mathcal{K}{[n]}, ~\forall s\in \mathcal{S},~\forall n\in\mathcal{N}, \label{eq: p1_delay}\\
        &\lambda_{k, s}[n]\in \{0,1\}, ~\forall k\in \mathcal{K}{[n]}, ~\forall s\in \mathcal{S}, ~\forall n\in\mathcal{N},\label{eq: p1_lambda}\\
        &H_{\min} \leq h_{\mathrm{R}} \leq H_{\max}, \label{eq: p1_h}\\
        &0 \leq \theta_{\mathrm{R}} \leq \frac{\pi}{2}. \label{eq: p1_theta}
    \end{align}
\end{subequations}
Constraint \eqref{eq: p1_task} implies that each computing task can be assigned to at most one VEC server, and Constraint \eqref{eq: p1_server} ensures that the number of tasks processed concurrently does not exceed the maximum capacity of edge servers due to communication-computing resource limitations. Constraint \eqref{eq: p1_delay} enforces that the task completion probability from a vehicle to a server exceeds a predefined threshold once assigned. \eqref{eq: p1_lambda} is the constraint for the binary task assignment decisions. Constraint \eqref{eq: p1_h} sets the maximum and minimum values for the RIS altitude, while Constraint \eqref{eq: p1_theta} defines the range of the tilt angle.

\newrev{Problem \eqref{p:original} is highly challenging since it is mixed-integer nonlinear programming, where the non-linearity stems from Constraint \eqref{eq: p1_delay}, with $D_{k,s}[n]$ being a highly non-convex function of RIS altitude $h_R$ and tilt angle $\theta_R$.}
To address the joint task offloading and RIS placement problem, we propose a two-layer algorithm. For ease of presentation, we first elaborate on how to optimize the task assignment strategy for each instance in the inner layer. Subsequently, we show how to optimize the placement decision in the outer layer, executing the task assignment algorithm at each step. The details are given below.



\subsection{Task Assignment Optimiztion}
\label{sec:task_assign}
Assuming a fixed RIS position, we determine the task assignment for each instance. 
Specifically, for a particular instance $[n]$, the optimization problem can be reformulated into
\begin{subequations} \label{assignment}
    \begin{align}   
        \max_{\boldsymbol{\lambda}[n]}~ &\sum_{k=1}^{K[n]}\sum_{s=1}^S \lambda_{k,s}[n] \label{eq: p2_obj}\\
        \mathrm{s. t.}~ &\sum_{s=1}^S \lambda_{k, s}[n] \leq 1, ~\forall k\in \mathcal{K}[n], \\
        &\sum_{k=1}^{K[n]} \lambda_{k, s}[n] \leq C_s, ~\forall s\in \mathcal{S}, \label{eq:p2_sever} \\
        & \lambda_{k, s}[n] \leq \mathds{1}{\left(\mathbb{P}\left[D_{k,s}[n]\geq D_k\right] \geq\eta \right)}, \nonumber \\
        &~~~~~~~~~~~~~\forall k\in \mathcal{K}[n], ~\forall s\in \mathcal{S}, \\
        &\lambda_{k, s}[n]\in \{0,1\}, ~\forall k\in \mathcal{K}[n], ~\forall s\in \mathcal{S},
    \end{align}
\end{subequations}
which is an integer linear programming, known to be computationally challenging in general with an exponentially increasing solution space. Fortunately, the optimization problem can be transformed into a \textbf{linear program} (LP) where the relaxed constraints naturally enforce integer solutions without additional constraints~\cite{Taylor}.


\textbf{LP Transformation.} We observe that the task assignment problem is equivalent to a bipartite matching, with one vehicle set and one server set. The edges between nodes in these two sets denote task assignments. We define two node edge adjacency matrices, $\mathbf{E}_{\mathrm{V}}[n]\in \mathbb{R}^{K[n]\times (K[n]\times S)}$ and $\mathbf{E}_{\mathrm{S}}[n]\in \mathbb{R}^{S\times (K[n] \times S)}$. In matrix $\mathbf{E}_{\mathrm{V}}[n]$, each row corresponds to a vehicle, and every column is an edge indicating the vehicle-server matching. Specifically, $\mathbf{E}_{\mathrm{V}}[n](i,j)=1$ means vehicle $i$ offloads computing task by edge $j$, while $\mathbf{E}_{\mathrm{V}}[n](i,j)=0$ otherwise. Similarly, every row in matrix $\mathbf{E}_{\mathrm{S}}[n]$ corresponds to a server, and each column corresponds to an edge signifying the vehicle-server matching. Specifically, $\mathbf{E}_{\mathrm{S}}[n](i,j)=1$ indicates that server $i$ is assigned task via edge $j$, while $\mathbf{E}_{\mathrm{S}}[n](i,j)=0$ otherwise. 
According to this definition, the node edge matrices have the following properties: The matrix $\mathbf{E}_{\mathrm{V}}[n]\boldsymbol{\lambda}[n]$ produces a vector in $\mathbb{R}^{K[n]}$, representing the number of tasks offloaded by each vehicle, which is at most one in our case. Similarly, $\mathbf{E}_{\mathrm{S}}[n]\boldsymbol{\lambda}[n]$ returns a vector in $\mathbb{R}^{S}$, representing the number of tasks assigned to each server. With these notations,  the optimization problem with decision vector $\boldsymbol{\lambda}[n]$ can be rewritten in matrix representation as follows
\begin{subequations}\label{p:linear_lambda}
    \begin{align}
        \max_{\boldsymbol{\lambda}[n]}~ ~&\mathbf{1}^T\boldsymbol{\lambda}[n]\\
        \mathrm{s. t.}~~ & \mathbf{W}[n]\boldsymbol{\lambda}[n]\preceq \mathbf{v}[n],
    \end{align}
\end{subequations}        
where
\begin{equation}
\mathbf{W}[n]=\left(\begin{array}{c}
\mathbf{E}_{\mathrm{V}}[n] \\
\mathbf{E}_{\mathrm{S}}[n] \\
\mathbf{I} \\
\mathbf{I} \\
-\mathbf{I}
\end{array}\right), \mathbf{v}[n]=\left(\begin{array}{c}
\mathbf{1} \\
\mathbf{c}_s \\
\mathbf{q}[n] \\
\mathbf{1} \\
\mathbf{0}
\end{array}\right),
\end{equation}
and $\mathbf{I}$ is the identity matrix, $\mathbf{1}$ is the vector with all ones, $\mathbf{c}_s\triangleq [C_1,\ldots,C_S]^T$ is the vector denoting the maximum number of tasks that servers can handle, $\mathbf{q}[n] \triangleq [\mathds{1}{\left(\mathbb{P}\left[D_{1,1}\geq D_1\right] \geq\eta \right)}, \ldots,\mathds{1}{\left(\mathbb{P}\left[D_{K[n],S}\geq D_{K[n]}\right] \geq\eta\right)}]^T$ is the indicator vector representing whether the task completion probability can be satisfied or not, and $\mathbf{0}$ is the vector with all zeros.

We have the following definition and proposition.

\begin{algorithm}[t] 
	\caption{Optimal Task Offloading}
 \label{alg: Alg1}
	\LinesNumbered 
	\KwIn{The computing task parameter set $\mathcal{T}_k(D_k, F_k, T^{th}_k)$, the maximum number of supporting tasks $C_s$, the task completion probability threshold $\eta$, RIS placement $(h_{\mathrm{R}}, \theta_{\mathrm{R}})$}
	\KwOut{The optimal task offloading $\boldsymbol{\lambda}[n]^\star$, $\forall n \in \mathcal{N}$}
 \For{$n \in \mathcal{N}$}{
 Calculate the task completion probability indicator $\mathds{1}{\left(\mathbb{P}\left[D_{k,s}[n]\geq D_k\right] \geq\eta \right)}$, $\forall k \in \mathcal{K}[n], \forall s \in \mathcal{S}$\;
 Obtain the task offloading strategy $\boldsymbol{\lambda}[n]^\star$ by solving the LP problem \eqref{p:linear_lambda}\;
 }
Return the optimal task offloading $\boldsymbol{\lambda}[n]^\star$, $\forall n \in \mathcal{N}$
\end{algorithm}

\begin{definition}
    A totally unimodular matrix is defined as a matrix of which every square non-singular submatrix is unimodular. Equivalently, every square submatrix has determinant $0, +1$, or $-1$.
\end{definition}

\begin{proposition}
    The integer matrix $\mathbf{W}[n]$ is totally unimodular.
\end{proposition}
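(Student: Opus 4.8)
The plan is to establish total unimodularity in two stages: first prove it for the bipartite incidence block $\left(\begin{smallmatrix} \mathbf{E}_{\mathrm{V}}[n] \\ \mathbf{E}_{\mathrm{S}}[n] \end{smallmatrix}\right)$, then show that stacking the blocks $\mathbf{I}, \mathbf{I}, -\mathbf{I}$ beneath it preserves the property. By the definition above, it suffices throughout to verify that every square submatrix has determinant in $\{-1, 0, +1\}$.

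First I would argue the incidence block is totally unimodular by induction on the order of its square submatrices. The key structural fact is that each column of $\left(\begin{smallmatrix} \mathbf{E}_{\mathrm{V}}[n] \\ \mathbf{E}_{\mathrm{S}}[n] \end{smallmatrix}\right)$ corresponds to a single vehicle--server edge and therefore contains exactly two ones, one in a vehicle row and one in a server row, with all other entries zero. Given an arbitrary square submatrix $\mathbf{B}$, I distinguish three cases according to its columns. If some column of $\mathbf{B}$ is all zeros, then $\det \mathbf{B} = 0$. If some column of $\mathbf{B}$ has a single one, I expand the determinant along that column, reducing to a strictly smaller submatrix and invoking the induction hypothesis, the base case of $1 \times 1$ submatrices being immediate since entries are $0$ or $1$. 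The decisive case is when every column of $\mathbf{B}$ has exactly two ones: then each retained column keeps both its vehicle endpoint and its server endpoint, so the sum of the vehicle rows of $\mathbf{B}$ and the sum of the server rows of $\mathbf{B}$ both equal the all-ones row vector. Their difference vanishes, exhibiting a linear dependence among the rows, whence $\det \mathbf{B} = 0$.

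Next I would handle the identity blocks via the standard closure property that appending a row equal to a signed unit vector $\pm \mathbf{e}_i^T$ to a totally unimodular matrix preserves total unimodularity. For any square submatrix that uses such an appended row, that row contributes at most one nonzero entry within the submatrix: if none, the determinant vanishes; if exactly one, Laplace expansion along that row again reduces to a smaller submatrix, which is a submatrix of the already-established totally unimodular incidence block. Applying this argument row by row for each unit row contributed by $\mathbf{I}, \mathbf{I}$, and $-\mathbf{I}$ lifts total unimodularity from the incidence block up to the full matrix $\mathbf{W}[n]$.

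The main obstacle I anticipate is the bookkeeping in the decisive case of the induction, namely verifying that when every column of $\mathbf{B}$ retains both of its ones, the vehicle rows and the server rows each sum to the all-ones vector. Making this precise requires the observation that a column can keep both of its ones only if the two rows hosting them are among the selected rows of $\mathbf{B}$; otherwise that column would fall into the single-one or zero-column case already dispatched. Once this partition of columns is stated cleanly, the linear dependence and the conclusion $\det \mathbf{B} \in \{-1, 0, +1\}$ follow at once, and the identity-block step becomes purely mechanical.
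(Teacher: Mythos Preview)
Your proposal is correct and follows essentially the same two-step route as the paper: establish total unimodularity of the bipartite incidence block $\left(\begin{smallmatrix} \mathbf{E}_{\mathrm{V}}[n] \\ \mathbf{E}_{\mathrm{S}}[n] \end{smallmatrix}\right)$, then use closure under appending $\pm\mathbf{I}$. The only difference is that the paper dispatches both steps by citation to classical results, whereas you supply the standard elementary proofs (the three-case induction for bipartite incidence, and Laplace expansion along unit rows); your version is thus more self-contained but not a different approach.
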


\begin{proof}
     The node edge adjacency matrices, $\mathbf{E}_{\mathrm{V}}[n]$ and $\mathbf{E}_{\mathrm{S}}[n]$, are totally unimodular by construction \cite{Princeton}. If a matrix is totally unimodular, any matrices derived from it by appending $\mathbf{I}$ or $-\mathbf{I}$ remains also unimodular according to \cite{unimodular}. Hence, the matrix $\mathbf{W}[n]$ retains this property as well.
\end{proof}

Given that the constraint matrix $\mathbf{W}[n]$ is totally unimodular and the entries in the vector $\mathbf{v}[n]$ are all integers, \rev{according to Theorem 13.2 in \cite{unimodular}}, the vertices of the convex polytope defined by $\mathbf{W}[n]\boldsymbol{\lambda}[n]\preceq \mathbf{v}[n]$ have integer coordinates. 
\rev{Since the optimal solution of a linear program lies at a vertex of its feasible region, at least one optimal solution to our problem will be binary.}
As a result, we can reformulate and solve the linear program efficiently, and the relaxation, i.e., $\lambda_{k, s}[n]\in [0,1]$, preserves the integrality of the optimal solution while satisfying all relevant constraints.
The algorithm is outlined in Algorithm \ref{alg: Alg1}. Additionally, utilizing an LP solver for solving such a linear program proves to be very efficient \cite{Boyd}. 
This method yields optimal task assignments with low computational complexity, \newrev{making it possible to obtain the optimal task offloading in each iteration of searching RIS positions, as detailed below.}


\subsection{RIS Placement Optimization}
As alluded to earlier, raising the RIS altitude increases the probability of establishing an LoS connection while also increasing the path attenuation of the vehicle-RIS-server channel. Since both the vertical position and the tilt angle of the RIS affect the performance, the resulting placement problem becomes highly intractable. In what follows, we provide a detailed explanation of the \textbf{Intractability of the Problem}.

\begin{itemize}
    \item \textbf{Implicit Relationship:} RIS placement has an indirect impact on task throughput, making it difficult to deduce the exact influence of RIS placement on our objective function. Additionally, the RIS position also impacts the task completion probability constraints in an implicit manner, adding another layer of complexity.

    \item \textbf{Non-continuity:} RIS positioning influences the probability of LoS or NLoS channels between RIS and vehicles or servers, resulting in a probabilistic constraint with an indicator function, which is discontinuous. Therefore, there is no informative gradient available for gradient-based methods.

    \item \textbf{Trigonometric Dependence:} The RIS altitude and tilt angle affect its gain, i.e., the power radiation pattern, by changing the elevation angles from the RIS to vehicles and servers, i.e., $\theta_{k}^t[l]$ and $\theta_{s}^r$. These angles involve inverse trigonometric functions. In addition, the probability that links are LoS or NLoS also contains trigonometric expressions as in equation \eqref{eq:theta}, adding more complexity to the optimization.
    
\end{itemize}
The aforementioned issues prevent us from acquiring a closed-form solution to RIS placement. Fortunately, the altitude $h_{\mathrm{R}}$ and tilt angle $\theta_{\mathrm{R}}$ of the RIS are both bounded. Therefore, we first provide an optimal solution through grid search. Then, we propose a low-complexity solution using a hill climbing algorithm. 


\begin{algorithm}[t] 
	\caption{Optimal Design for RIS Placement}
 \label{alg: AlgOP}
	\LinesNumbered 
	\KwIn{The computing task parameter set $\mathcal{T}_k(D_k, F_k, T^{th}_k)$, initial position of vehicle $k$ $(x_k[1], y_k[1], z_k)$, maximum number of supporting tasks $C_s$, task completion probability threshold $\eta$, channel environment parameters $A_1$ and $A_2$, length of each grid $\Delta_d$, velocity $v_k$, antenna gains $G_t$ and $G_r$, length and width of RIS element $b$ and $d$, additional signal attenuation factors $\xi_k$ and $\xi_s$, bandwidth $B$, and the feasible set for RIS placement $\mathcal{U}$}
	\KwOut{The optimal RIS placement $(h_{\mathrm{R}}, \theta_{\mathrm{R}})^\star$ and optimal task offloading $\boldsymbol{\lambda}[n]^\star$, $\forall n \in \mathcal{N}$}
 \For{$(h_{\mathrm{R}}, \theta_{\mathrm{R}})^{(i)} \in 
 \mathcal{U}$}{ 
 Optimize task offloading $\boldsymbol{\lambda}[n]^\star$, $\forall n \in \mathcal{N}$, based on Algorithm \ref{alg: Alg1}\;
 Calculate the task throughput by $R({(h_{\mathrm{R}}, \theta_{\mathrm{R}})^{(i)}}) = \frac{1}{N}\sum_{n=1}^N\sum_{k=1}^{K[n]}\sum_{s=1}^S \lambda_{k,s}[n]$\;
   \If{$R({(h_{\mathrm{R}}, \theta_{\mathrm{R}})^{(i)}}) > R^\star$}
   {
   Update task throughput $R^\star \gets R({(h_{\mathrm{R}}, \theta_{\mathrm{R}})^{(i)}}) $\;
   Update placement $(h_{\mathrm{R}}, \theta_{\mathrm{R}})^{\star} \gets (h_{\mathrm{R}},\theta_{\mathrm{R}})^{(i)}$\;}
   }
Return the optimal RIS placement $(h_{\mathrm{R}}, \theta_{\mathrm{R}})^{\star}$ and optimal task offloading $\boldsymbol{\lambda}[n]^\star$, $\forall n \in \mathcal{N}$ 
\end{algorithm}

\rev{
\textbf{I. Grid Search for RIS Placement:}} Since the solution space is only two-dimensional, we first determine the optimal placement pair $(h_{\mathrm{R}}^{\star}, \theta_{\mathrm{R}}^\star)$ through grid search, which is computationally feasible in many situations. According to constraints \eqref{eq: p1_h} and \eqref{eq: p1_theta}, the feasible set can be given by
\begin{equation}
     \mathcal{U} = \mathcal{H} \times \Theta=\left\{ (h_{\mathrm{R}}, \theta_{\mathrm{R}}) \mid h_{\mathrm{R}} \in  \mathcal{H}, \theta_{\mathrm{R}} \in \Theta \right\},
\end{equation}
where $\mathcal{H}$ and $\Theta$ are the sets containing all possible values for $h_{\mathrm{R}}$ and $\theta_{\mathrm{R}}$, respectively, and can be expressed as
\begin{equation}
    \begin{aligned}
        &\mathcal{H} = \{H_{\min}, H_{\min}+\Delta h, H_{\min}+2\Delta h, ..., H_{\max}\},\\
        &\Theta = \{0, \Delta \theta, 2 \Delta \theta, ..., \frac{\pi}{2}\},
    \end{aligned}
\end{equation}
where  $\Delta h$ and $\Delta \theta$ are the step sizes of grid search. The RIS altitude and tilt angle are then found by searching within this feasible set, with each step of search calculating the task throughput by solving \eqref{p:linear_lambda} via Algorithm \ref{alg: Alg1}. \rev{When the search step size is sufficiently small, the obtained RIS placement can be considered optimal, serving as the optimal baseline.} The procedure is summarized in Algorithm \ref{alg: AlgOP}.

\begin{algorithm}[t] 
	\caption{HC Algorithm for RIS Placement}
 \label{alg: AlgHC}
	\LinesNumbered 
	\KwIn{The computing task parameter set $\mathcal{T}_k(D_k, F_k, T^{th}_k)$, initial position of vehicle $k$ $(x_k[1], y_k[1], 0)$, maximum number of supporting tasks $C_s$, task completion probability threshold $\eta$, channel environment parameters $A_1$ and $A_2$, length of each grid $\Delta_d$, velocity $v_k$, antenna gains $G_t$ and $G_r$, length and width of RIS element $b$ and $d$, additional signal attenuation factors $\xi_k$ and $\xi_s$, bandwidth $B$, feasible set for RIS placement $\mathcal{U}$, number of particles $J$, threshold for determining stopping $\delta$, maximum number of iterations $N_{\mathrm{ite}}$}
	\KwOut{The RIS placement $(h_{\mathrm{R}}, \theta_{\mathrm{R}})^\star$ and optimal task offloading $\boldsymbol{\lambda}[n]^\star$, $\forall n \in \mathcal{N}$} 
 Initialize population $\mathcal{P}$ of $J$ particles\; 
 $\epsilon_{\max} = \infty$, $i=1$\;
 \While{$\epsilon_{\max} > \delta$ \text{and} $i \leq N_{\mathrm{ite}}$}{
 $i \gets i +1 $\;
 \For{$\rho_j \in \mathcal{P}$}{
 Optimize task offloading $\boldsymbol{\lambda}[n]^\star$, $\forall n \in \mathcal{N}$, based on Algorithm \ref{alg: Alg1}\;
 Calculate the task throughput by $R(\rho_j) = \frac{1}{N}\sum_{n=1}^N\sum_{k=1}^{K[n]}\sum_{s=1}^S \lambda_{k,s}[n]$\;
  Select random particle $\mu \neq \rho_j$\;
 \For{$\kappa \in \{1, 2\}$}{
    $\epsilon_{\max} \gets |\rho_j(\kappa) - \mu(\kappa)|$\;
 Generate $r \in [-\epsilon_{\max}, +\epsilon_{\max}]$\;
 ${\rho_j}'(\kappa) \gets \rho_j(\kappa) +r$\; 
 }
 Optimize task offloading $\boldsymbol{\lambda}[n]^\star$, $\forall n \in \mathcal{N}$, based on Algorithm \ref{alg: Alg1}\;
 Calculate the task throughput by $R(\rho_j') = \frac{1}{N}\sum_{n=1}^N\sum_{k=1}^{K[n]}\sum_{s=1}^S \lambda_{k,s}[n]$\;
 \If {$R(\rho_j') > R(\rho_j)$}{
$\rho_j \gets \rho_j'$\;
}
}
}
Return RIS placement $(h_{\mathrm{R}}, \theta_{\mathrm{R}})^{\star}$ and the optimal task offloading $\boldsymbol{\lambda}[n]^\star$, $\forall n \in \mathcal{N}$
\end{algorithm}

\begin{remark}
     Since the solution space is two-dimensional and RIS placement is conducted offline, the computational complexity remains practical for problems with reasonable step sizes, as demonstrated in our simulations.
\end{remark}


\textbf{II. HC-based RIS Placement:} 
In addition to grid search, other methods, such as genetic algorithms (GA) \cite{GA}, can be employed as heuristic approaches to address the placement problem \rev{with continuous variables}. However, these heuristics are often time-consuming and tend to lead to sub-optimal solutions. To swiftly locate a near-optimal solution, we herein propose a hill climbing algorithm as a heuristic. The algorithm is motivated by the relationship between the optimal task throughput and RIS placement, where the task throughput forms only one “peak” in terms of RIS placement, i.e., tilt angle and altitude. The visualization can be found in Fig. \ref{fig:throughput_system}, which will be presented in our simulations. Intuitively, raising the RIS can establish connections to more servers but at the price of increasing path attenuation, whereas moving the tilt angle away from the optimal angle decreases the reflective gain. Both the experiments and our intuitions suggest that there is probably only one peak or a local optimum in the performance curves, which is the case where the hill climbing algorithm can find the globally optimal solution.

As outlined in Algorithm \ref{alg: AlgHC}, the hill climbing algorithm begins with initializing multiple particles (Line 1). Then, it makes the random movements of these particles in each round (Lines 9-13) and records new positions if there are improvements (Lines 16-18). The algorithm terminates if no further changes can be made, implying converging to at least a local optimum. Since the efficiency of hill climbing algorithms is influenced greatly by the chosen step size, we devise the algorithm with a self-adaptive step size. In this scheme, the temporary neighborhood of the point $\rho_j$ is determined by the distance between itself and a randomly selected sample $\mu$ (Lines 8-13). In the beginning, this distance is likely to be large because the initial population is uniformly distributed over the search space. As the search progresses, each point gravitates toward a local optimum, naturally reducing the step size. In addition, if $\rho_j$ and $\mu$ are located in different clusters, $\rho_j$ has a chance to escape its local optimum. Thus, our algorithm also maintains a certain level of performance robustness even with multiple local optimums \cite{ASHC}.

We will evaluate our HC-based RIS placement in Section \ref{sec: simulation}. Although it is challenging to prove the global optimality theoretically, our experiments demonstrate that it achieves the global optimum in all our experimental settings. Moreover, it uses much lower running time than other heuristic baselines, such as GA-based algorithms.

\subsection{RIS Phase-shift Optimization}
\label{sec: phase-shift}
Once the RIS is optimally placed, its phase shift is adjusted in real-time. Specifically, an efficient algorithm is employed to optimize the phase shift based on the current number of vehicles and their locations, facilitating better communication and computation load balancing.
Numerous studies have explored RIS phase shift optimization, including the classical semidefinite relaxation (SDR) technique \cite{Qingqing}, majorization-minimization (MM) method \cite{MM}, and manifold optimization \cite{Manifold}. The phase shift design can further improve the performance of the RIS-assisted VEC system.

\subsection{Computational Complexity}
We propose a two-layer algorithm to optimize both RIS placement and task offloading strategy. In the outer layer, a grid search method is employed to find the optimal RIS placement. This process has a computational complexity of $\mathcal{O}(U)$, where $U$ represents the total number of potential positions for the RIS. To enhance efficiency, we introduce a hill climbing algorithm inspired by our insights, with a complexity of $\mathcal{O}(N_{\mathrm{g}}J)$, where $N_{\mathrm{g}}$ is the number of iterations until convergence and $J$ represents the population size.

Once the RIS position is determined in the outer layer, the problem is reduced to a deterministic task offloading optimization in the inner layer. Here, we apply a linear optimization technique to determine the optimal task offloading strategy. Specifically, we can utilize existing solvers that implement the interior-point method to solve the optimization problem, with a typical complexity of $\mathcal{O}((K \times S)^3)$. Consequently, the overall computational complexity for the optimal RIS placement and task offloading process is $\mathcal{O}(U N (K \times S)^3)$. When adopting the hill climbing algorithm for RIS placement, the overall complexity becomes $\mathcal{O}(N_{\mathrm{g}} J N (K \times S)^3)$.
\rev{The low complexity of the algorithm, combined with the parallel processing capability of VEC servers, ensures the scalability of our proposed framework even under heavy computing workloads.}


\section{Performance Evaluation}
\label{sec: simulation}

In this section, we provide extensive simulation results to demonstrate the effectiveness of our proposed RIS placement and task offloading design.  

\begin{table*}[!t]
\centering
\caption{\rev{Parameter settings for simulations}}
\label{table_para}
\renewcommand{\arraystretch}{1.4}
\setlength{\tabcolsep}{2mm}
\rev{
\begin{tabular}{|c|c|c|c|}
\hline
{Center frequency} &  {$f$ = 5.9 GHz}&  {Bandwidth}& {$B$ = 20 MHz}   \\  \hline
Antenna gain & {$G_tG_r = 100$}  & {RIS element gain}  & {$G=8$}   \\ \hline
Path loss exponent   & {$\alpha = 2.7$}  & {Attenuation factor for NLoS channel}  & {\( \xi_k = \xi_s = -20 \) dB } \\ \hline    
{Urban channel environment factors}  & {\( A_1 = 11.95 \), \( A_2 = 0.136 \)}  & {Noise power} &\( n_0 = -100 \) dBm  \\ \hline
Task deadline & {$T^{th}_k$ = 0.1 s}  & {Computing capability for each task}  & {20 TFLOPS} \\ \hline
Number of supporting tasks & {$C_s=4$}  & {Task completion probability}  & {$\eta$ = 0.75 } \\ \hline
\end{tabular}
}
\end{table*}

\subsection{Simulation Settings}
For ease of illustration, we consider a straight four-lane road from -100 m to 100 m with two lanes in each direction, with each lane 4 m wide. We use SUMO to generate vehicle mobility and arrivals, where vehicles travel at speeds between 40 km/h and 72 km/h (for urban roads), \rev{and 80 km/h - 130 km/h (for highway)}. We set the maximum acceleration to 2.0 m/s$^2$, the maximum deceleration (braking) to 3.0 m/s$^2$ based on the Krauss model \cite{Krauss}, and the vehicle arrival rate to 0.7/s.  Assuming the center of the considered road segment is $(0, 0, 0)$, the RIS is located at $(0, -12, h_{\mathrm{R}})$. Unless specified otherwise, four VEC servers are equidistant along the opposite side of the road from the RIS, with the transceivers at an altitude of 6 meters. The maximum and minimum values for the RIS altitude are $H_{\min} =0$ m and $H_{\max}=90$ m.
\rev{Once the RIS is optimally placed, its phase shift is designed using manifold optimization \cite{Manifold}.}

\begin{figure}[t]
    \centering
    \begin{subfigure}[b]{0.42\textwidth}
    \includegraphics[width=\textwidth]{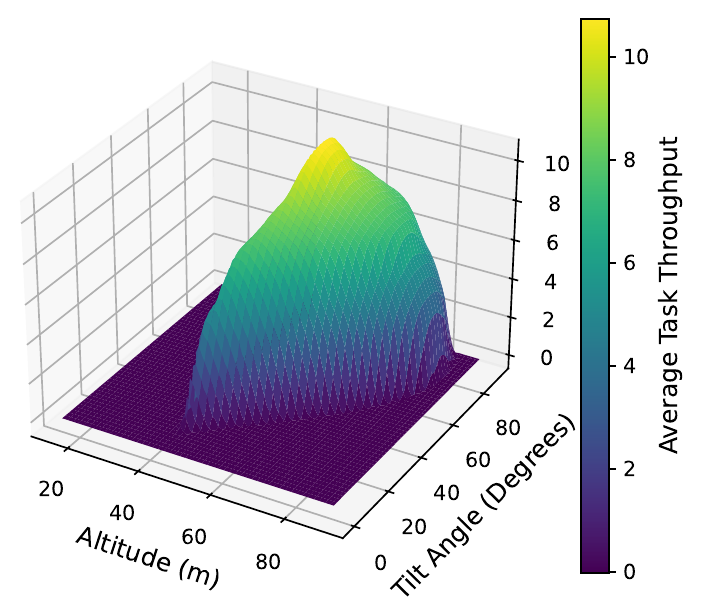}
    \caption{Average task throughput versus RIS placement (4 servers, each supporting up to 3 computing tasks).}
    \label{fig:throughput_deployment_new}
\end{subfigure}
    \hfill
\begin{subfigure}[b]{0.42\textwidth}
    \includegraphics[width=\textwidth]{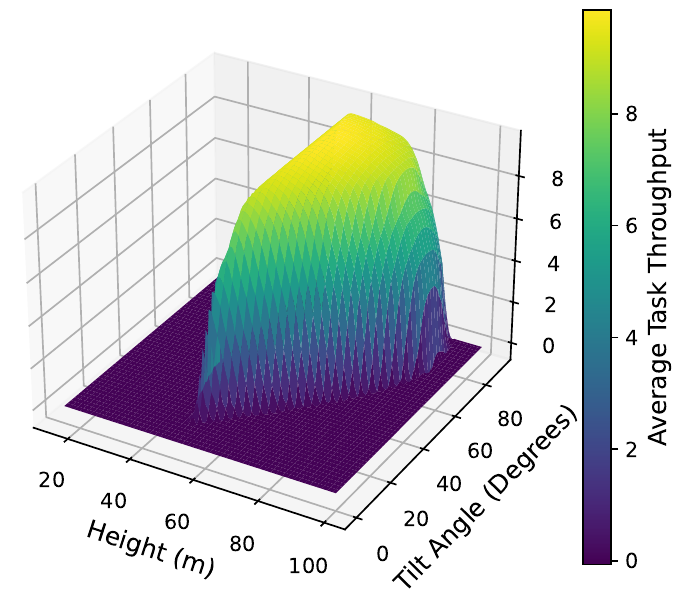}
    \caption{Average task throughput versus RIS placement (6 servers, each supporting up to 2 computing tasks).}
    \label{fig:throughput_deployment_c3_new}
\end{subfigure}
    \caption{Average task throughput versus RIS placement under different VEC server conditions ($\alpha$ = 2.8, $\eta$ = 0.75).}
    \label{fig:throughput_system}
    \vspace{-0.1 cm}
\end{figure}

Simulations are conducted at a center frequency of 5.9 GHz and the bandwidth is $B = 20$ MHz \cite{CV2X}. The antenna gains are set to \( G_tG_r = 100 \), and the gain for each RIS element is \( G = 8 \) \cite{Tang}. The length and width of each RIS element are \( b = d = \frac{\lambda}{5} \), with $200\times 200$ elements \cite{Xiaowen}. 
The path loss exponent $\alpha$ is set to 2.7, and the additional attenuation factor for the NLoS channel is \( \xi_k = \xi_s = -20 \) dB \cite{saad}. Parameters for the urban channel environment are set to \( A_1 = 11.95 \) and \( A_2 = 0.136 \), with a noise power of \( n_0 = -100 \) dBm \cite{saad}. $N=500$ instances are generated. 
For the computing task, we consider object detection, one of the most widely used applications in autonomous driving, and adopt YOLOv7 to perform the detection.
Each vehicle generates a computing task comprising 2 samples, with each sample being a $640 \times 640$ color image \rev{\cite{YOLO}}. The task deadline is $T^{th}_k$ = 0.1 second \rev{\cite{0.1s}}, and the computing complexity per data sample is \( F_k\) = 89.7 GFLOPs \cite{IDOD}. The computing capability allocated for each task is 20 TFLOPS \rev{\cite{Zheng}}. Each server can support up to $C_s=4$ tasks. The task completion probability $\eta$ is set to 0.75.
\rev{The key parameters are summarized in Table \ref{table_para}.}

\begin{figure*}[t]
    \centering
\begin{subfigure}[b]{0.24\textwidth}
    \includegraphics[width=\textwidth]{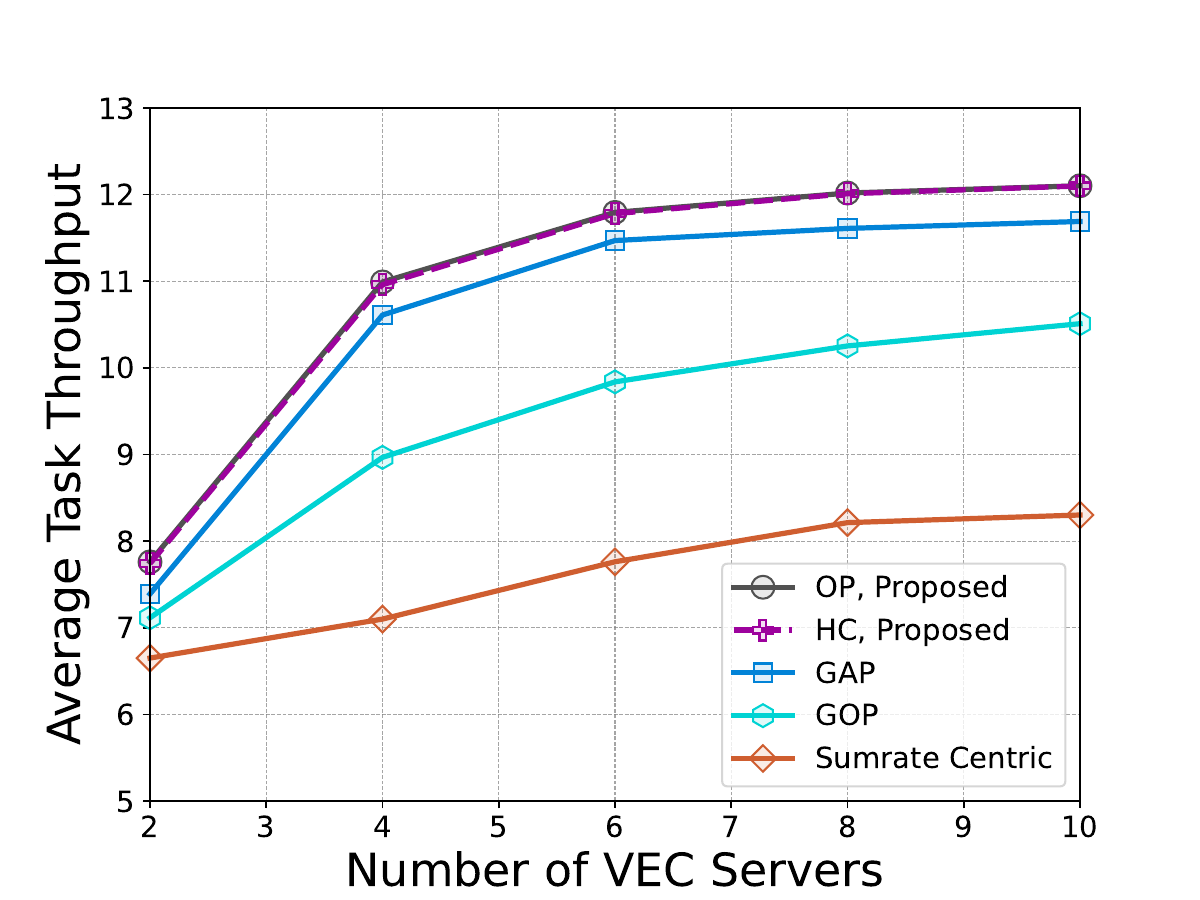}
    \caption{Vehicle arrival rate set to 0.7/s \rev{in urban roads}.}
    \label{fig:num_es}
\end{subfigure}
    \hfill
\begin{subfigure}[b]{0.24\textwidth}
    \includegraphics[width=\textwidth]
    {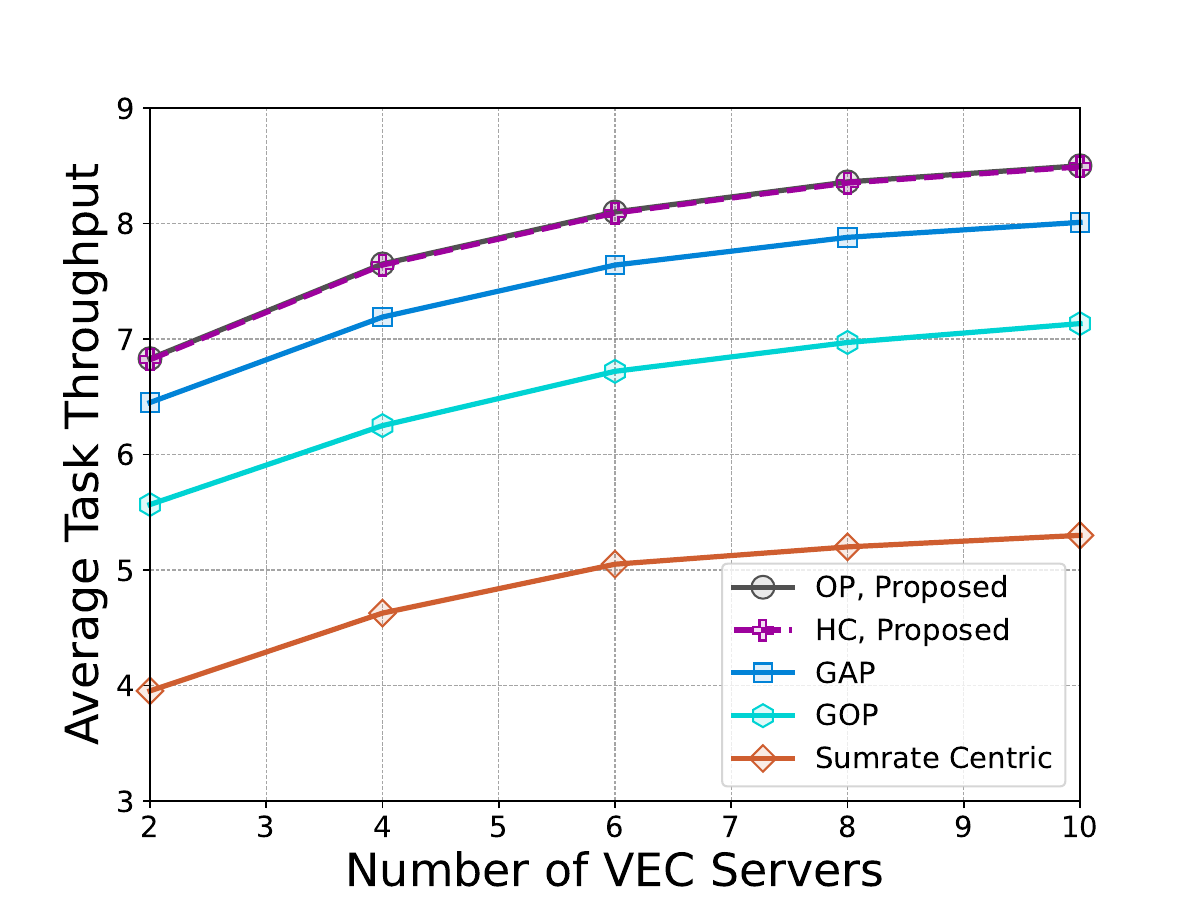}
    \caption{Vehicle arrival rate set to 0.5/s \rev{in urban roads}.}
    \label{fig:num_es_prob5}
\end{subfigure}
    \hfill
\begin{subfigure}[b]{0.24\textwidth}
    \includegraphics[width=\textwidth]{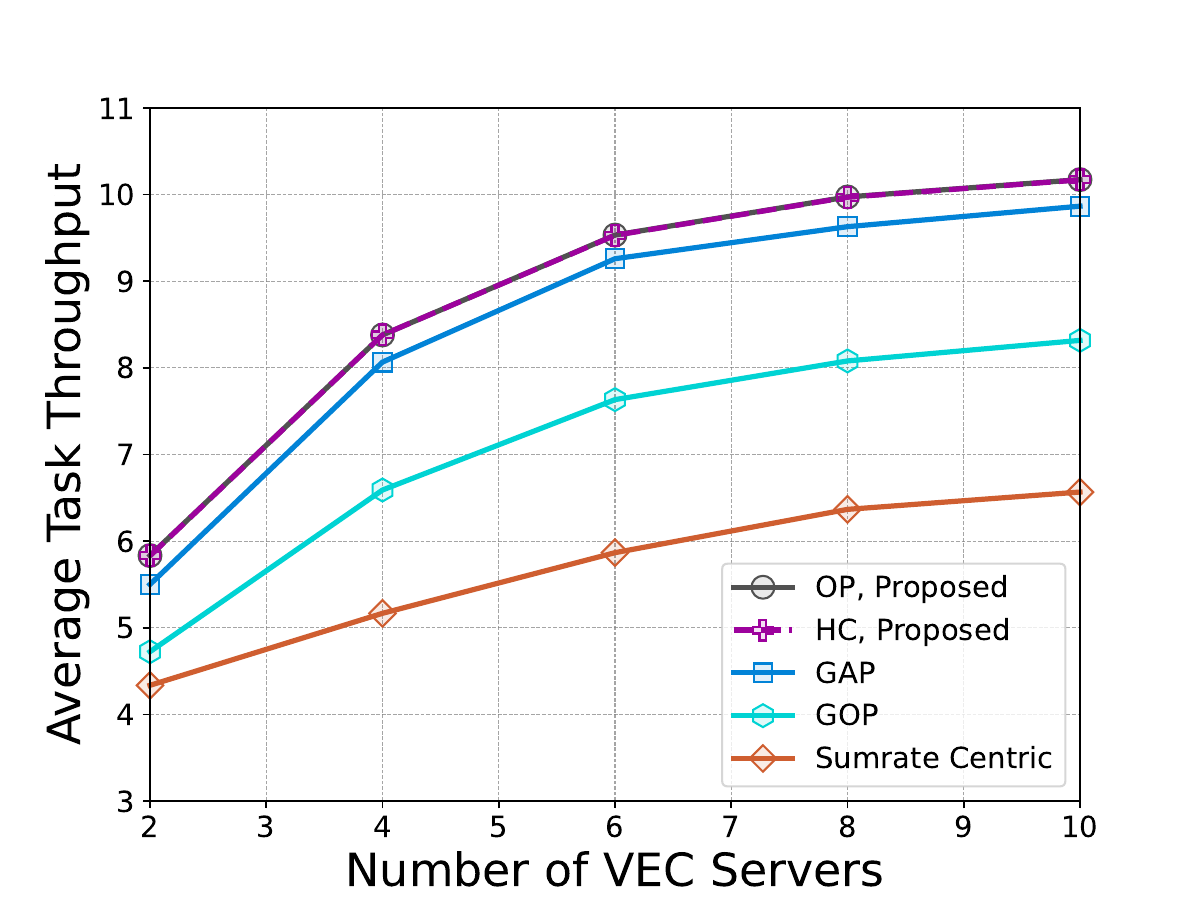}
    \caption{\rev{Vehicle arrival rate set to 0.7/s in highway environments}.}
    \label{fig:num_es}
\end{subfigure}
    \hfill
\begin{subfigure}[b]{0.24\textwidth}
    \includegraphics[width=\textwidth]
    {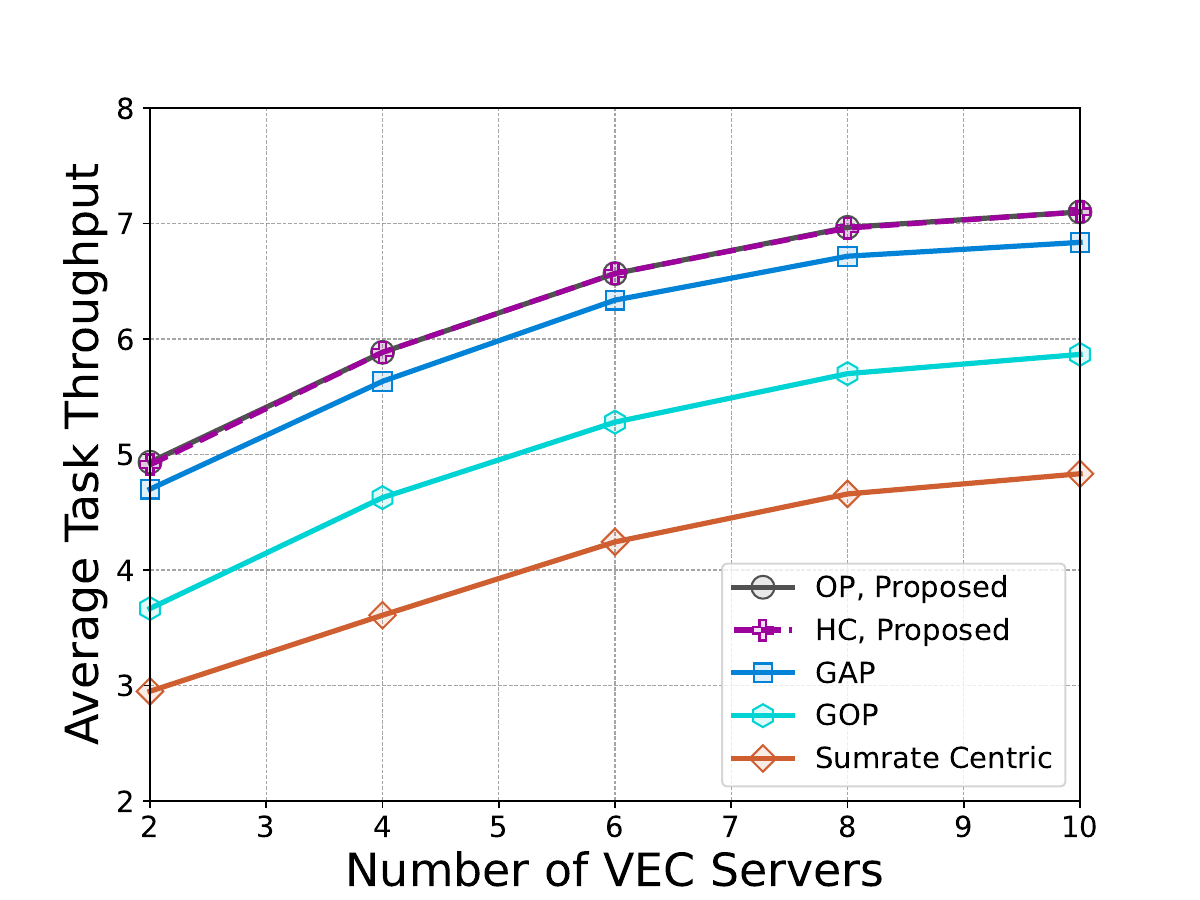}
    \caption{\rev{Vehicle arrival rate set to 0.5/s in highway environments}.}
    \label{fig:num_es_prob5}
\end{subfigure}
\caption{Average task throughput versus the number of VEC servers \rev{in urban roads (a)-(b) and highway environments (c)-(d)}.}
\label{fig:task_num_es}
\vspace{-0.2 cm}
\end{figure*}

\begin{figure*}[t]
    \centering
\begin{subfigure}[b]{0.24\textwidth}
    \includegraphics[width=\textwidth]{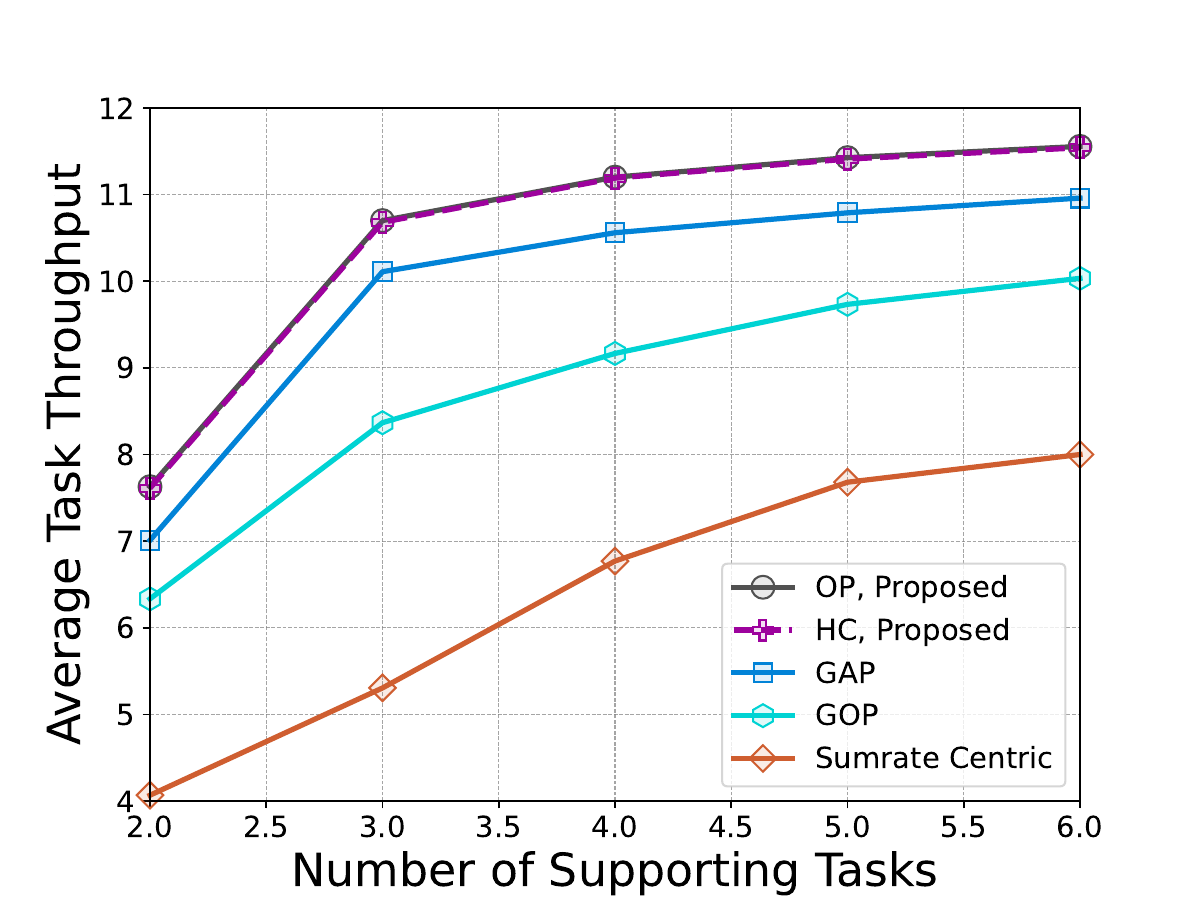}
    \caption{Vehicle arrival rate set to 0.7/s \rev{in urban roads}.}
    \label{fig:Cmax}
\end{subfigure}
    \hfill
\begin{subfigure}[b]{0.24\textwidth}
    \includegraphics[width=\textwidth]
    {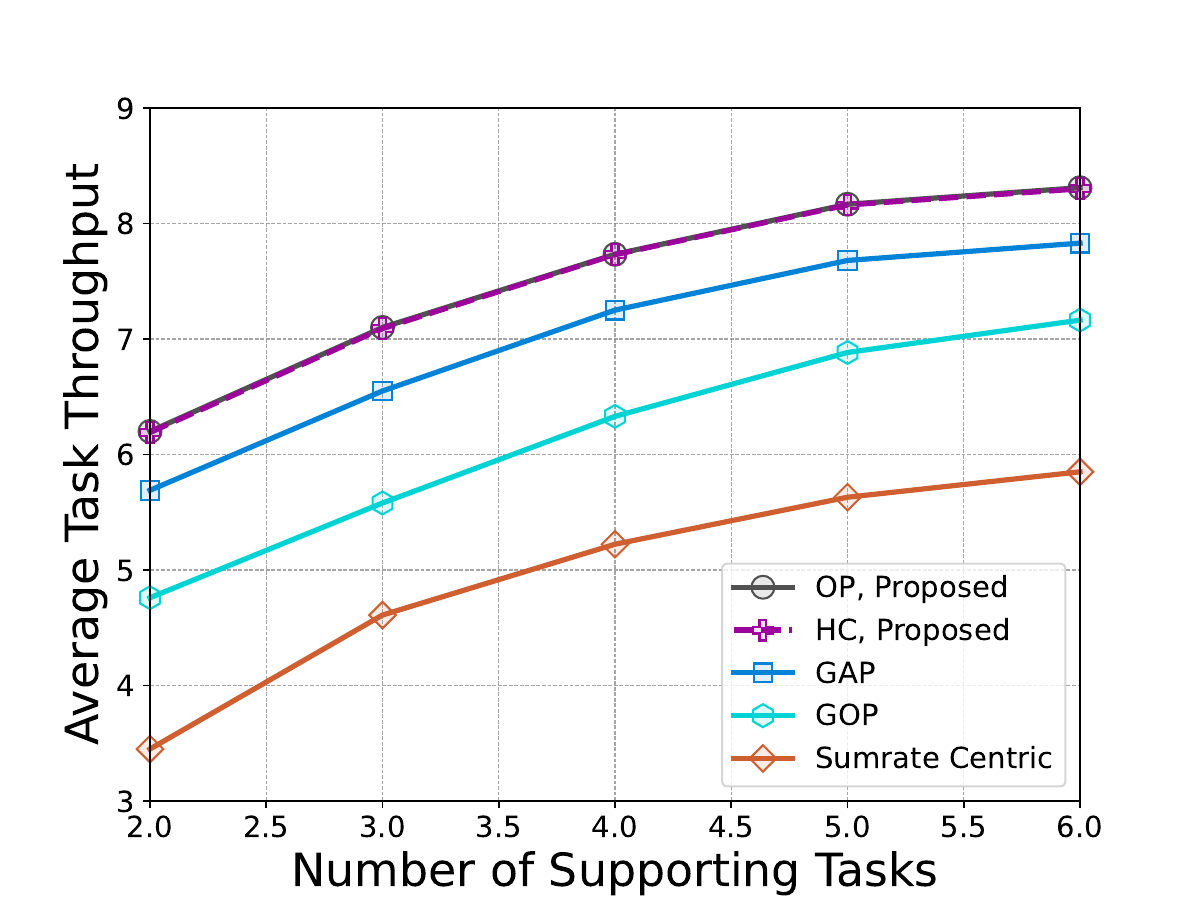}
    \caption{Vehicle arrival rate set to 0.5/s \rev{in urban roads}.}
    \label{fig:Cmax_prob5}
\end{subfigure}
    \hfill
\begin{subfigure}[b]{0.24\textwidth}
    \includegraphics[width=\textwidth]{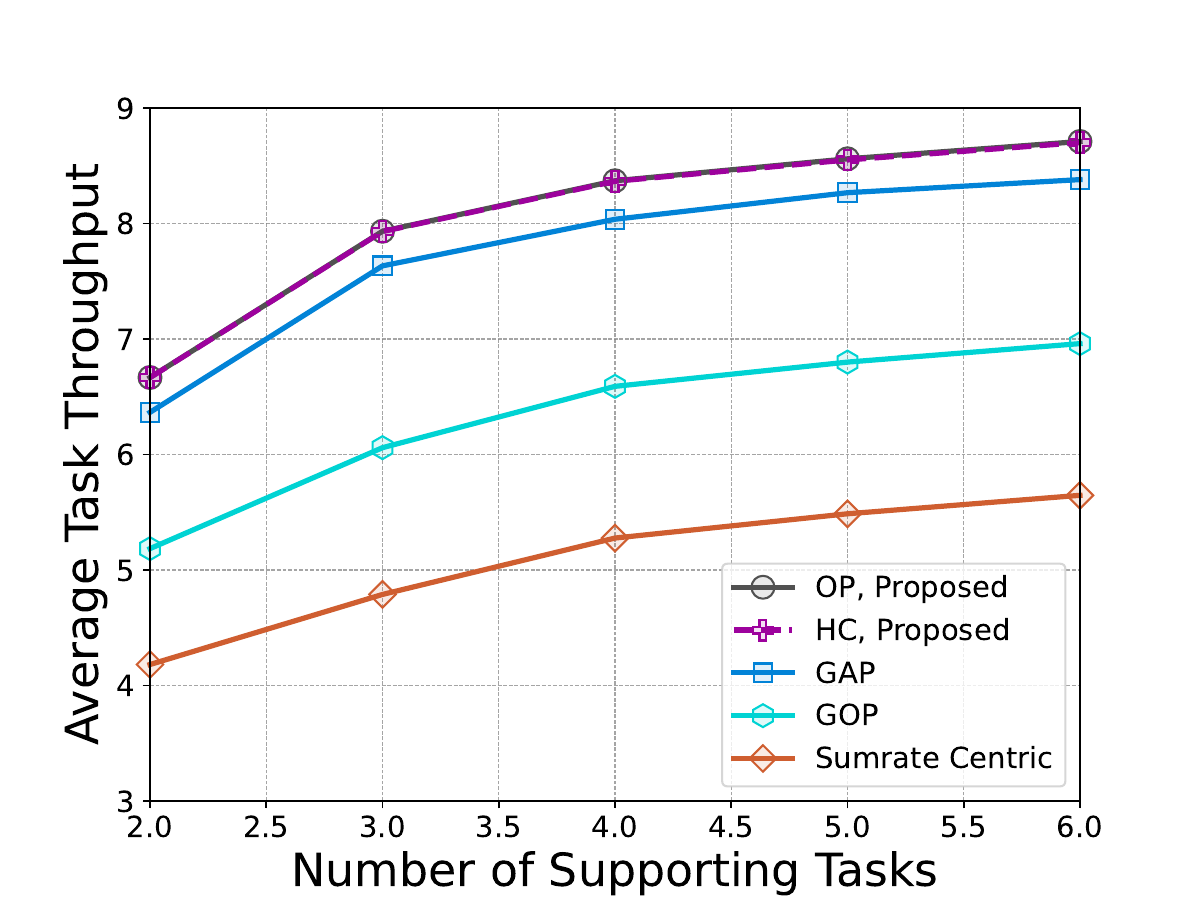}
    \caption{\rev{Vehicle arrival rate set to 0.7/s in highway environments}.}
    \label{fig:Cmax}
\end{subfigure}
    \hfill
\begin{subfigure}[b]{0.24\textwidth}
    \includegraphics[width=\textwidth]
    {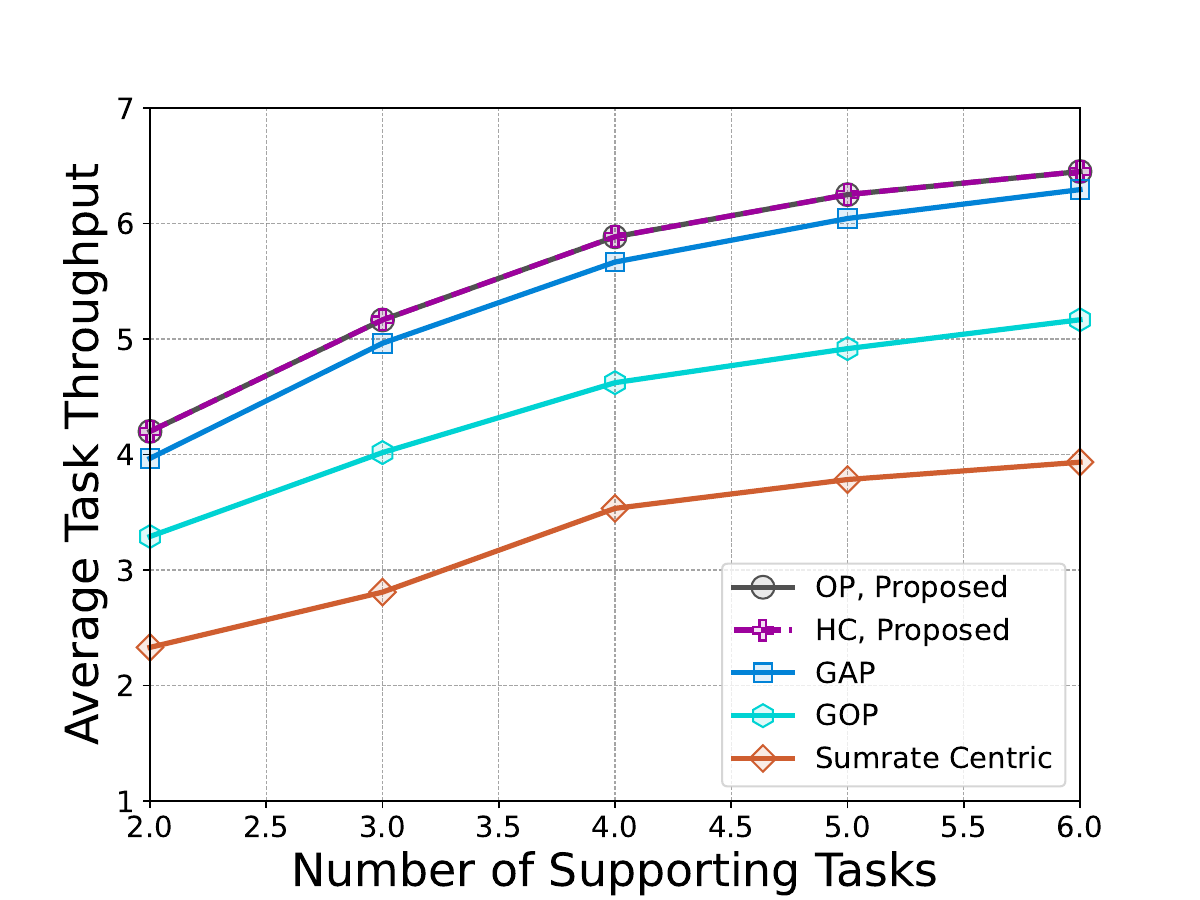}
    \caption{\rev{Vehicle arrival rate set to 0.5/s in highway environments}.}
    \label{fig:Cmax_prob5}
\end{subfigure}
\caption{Average task throughput versus the number of supporting tasks \rev{in urban roads (a)-(b) and highway environments (c)-(d)}.}
\label{fig:task_Cmax}
\vspace{-0.2 cm}
\end{figure*}

\begin{figure*}[t]
    \centering
\begin{subfigure}[b]{0.24\textwidth}
    \includegraphics[width=\textwidth]{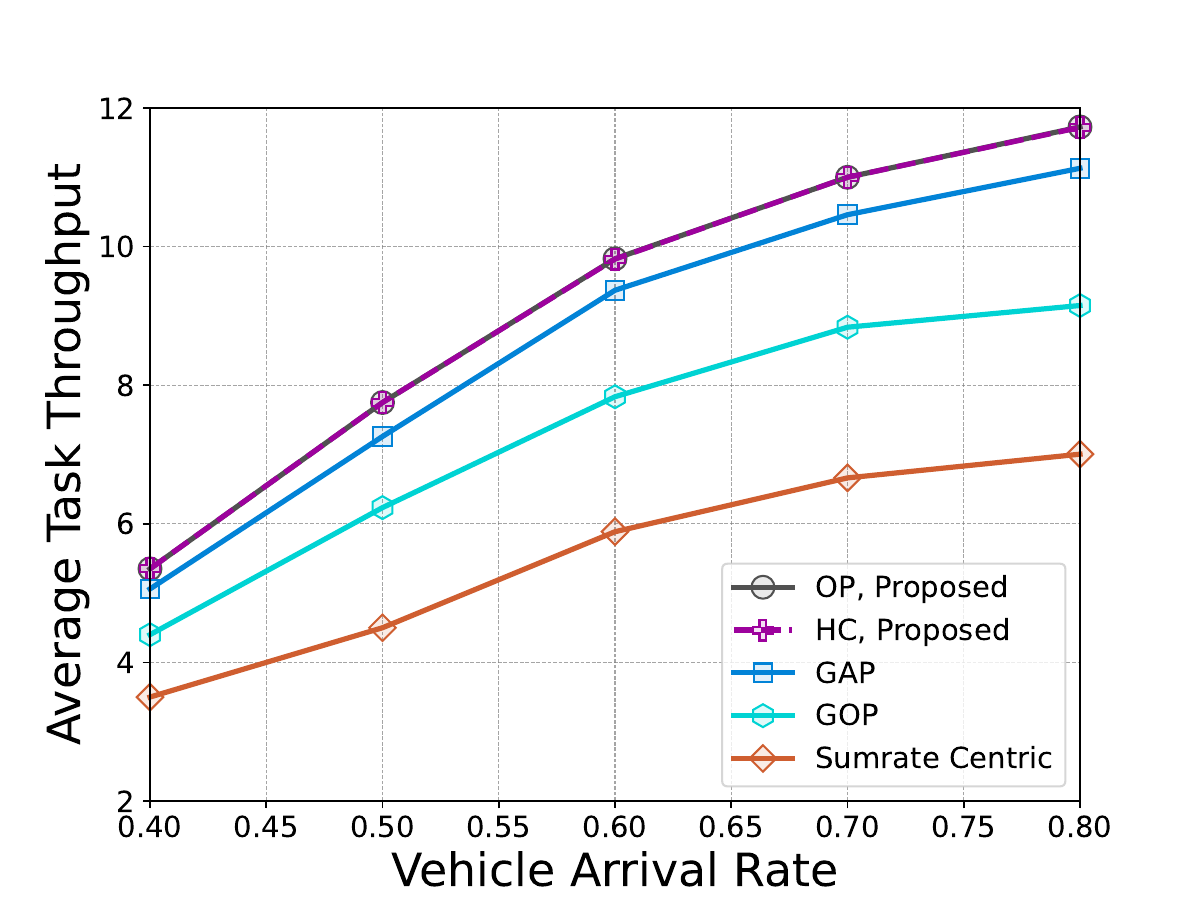}
    \caption{The number of VEC servers set to 4 \rev{in urban roads}.}
    \label{fig:density}
\end{subfigure}
    \hfill
\begin{subfigure}[b]{0.24\textwidth}
    \includegraphics[width=\textwidth]
    {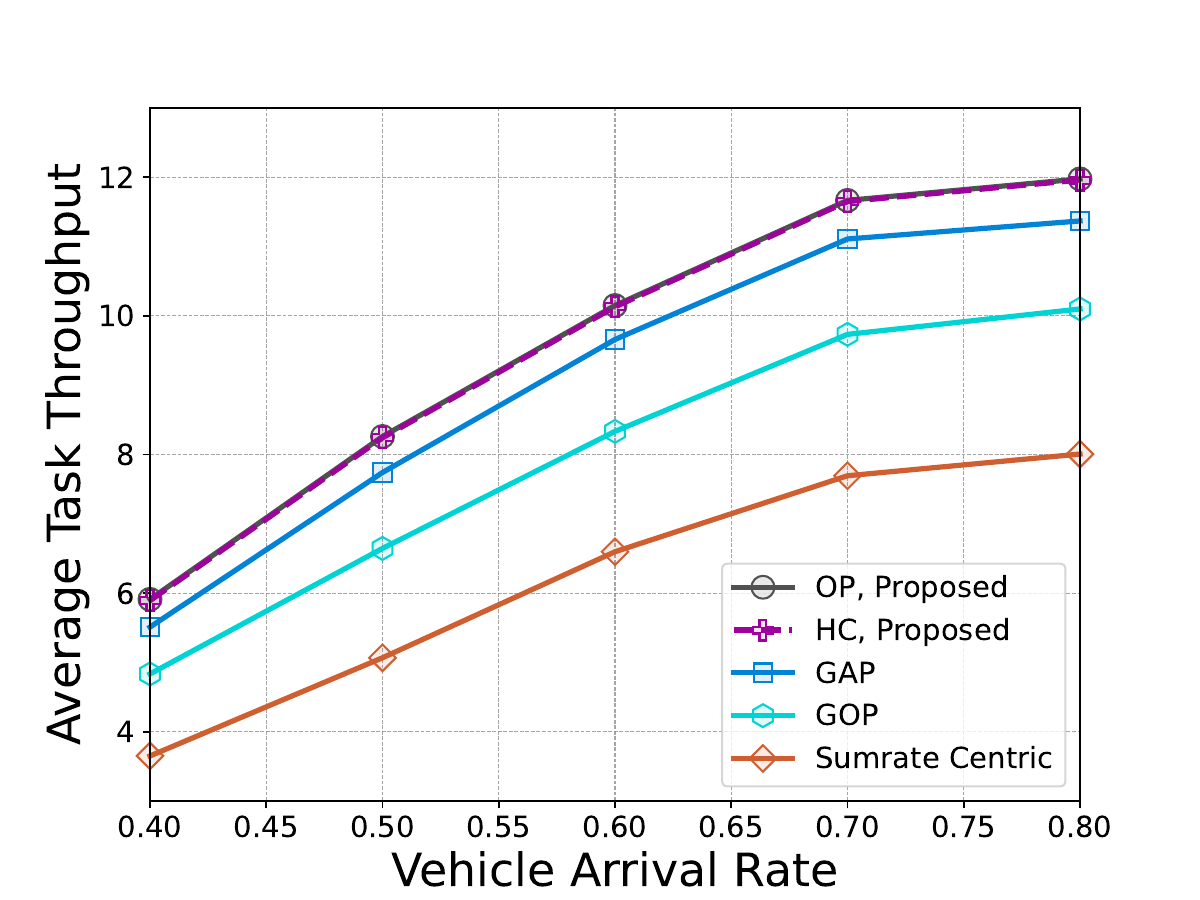}
    \caption{The number of VEC servers set to 6 \rev{in urban roads}.}
    \label{fig:density_prob5}
\end{subfigure}
    \hfill
\begin{subfigure}[b]{0.24\textwidth}
    \includegraphics[width=\textwidth]{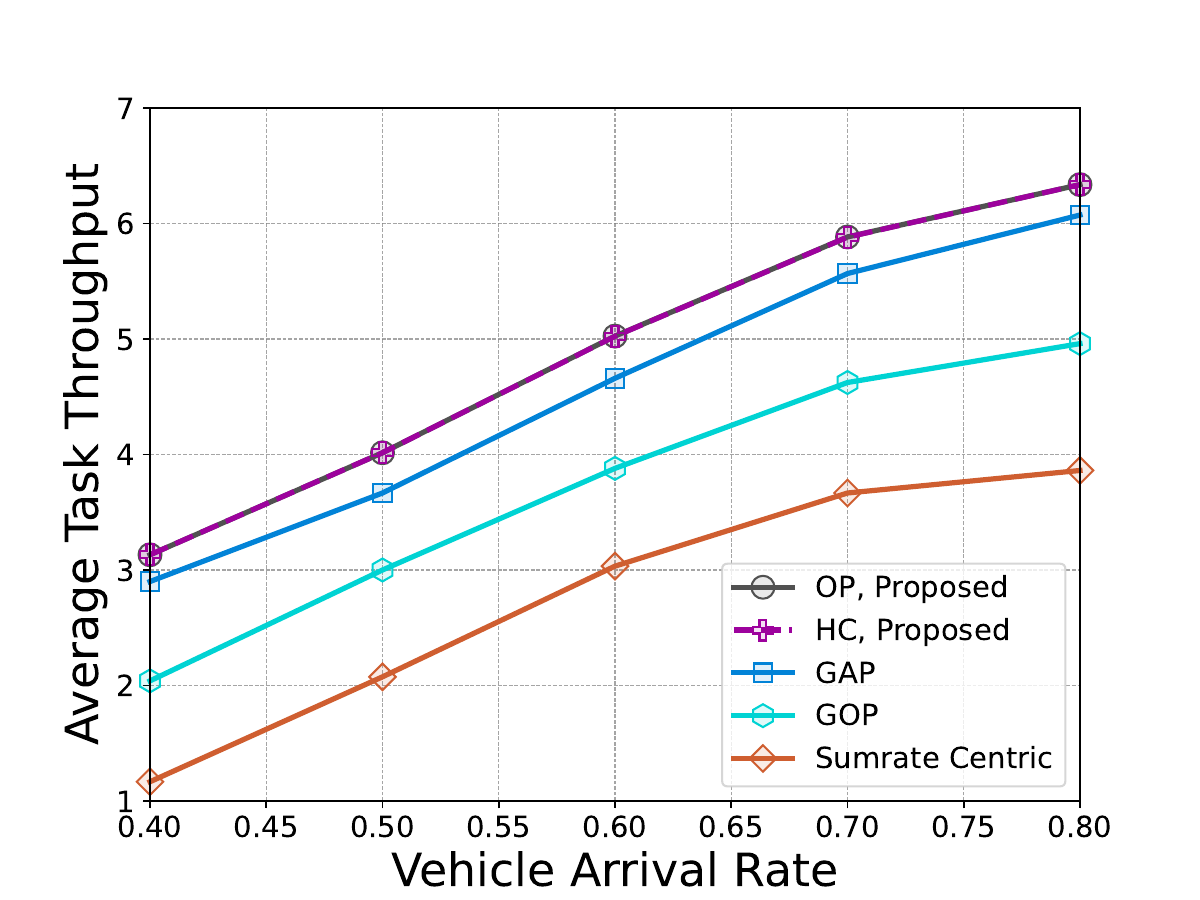}
    \caption{\rev{The number of VEC servers set to 4 in highway environments}.}
    \label{fig:density}
\end{subfigure}
    \hfill
\begin{subfigure}[b]{0.24\textwidth}
    \includegraphics[width=\textwidth]
    {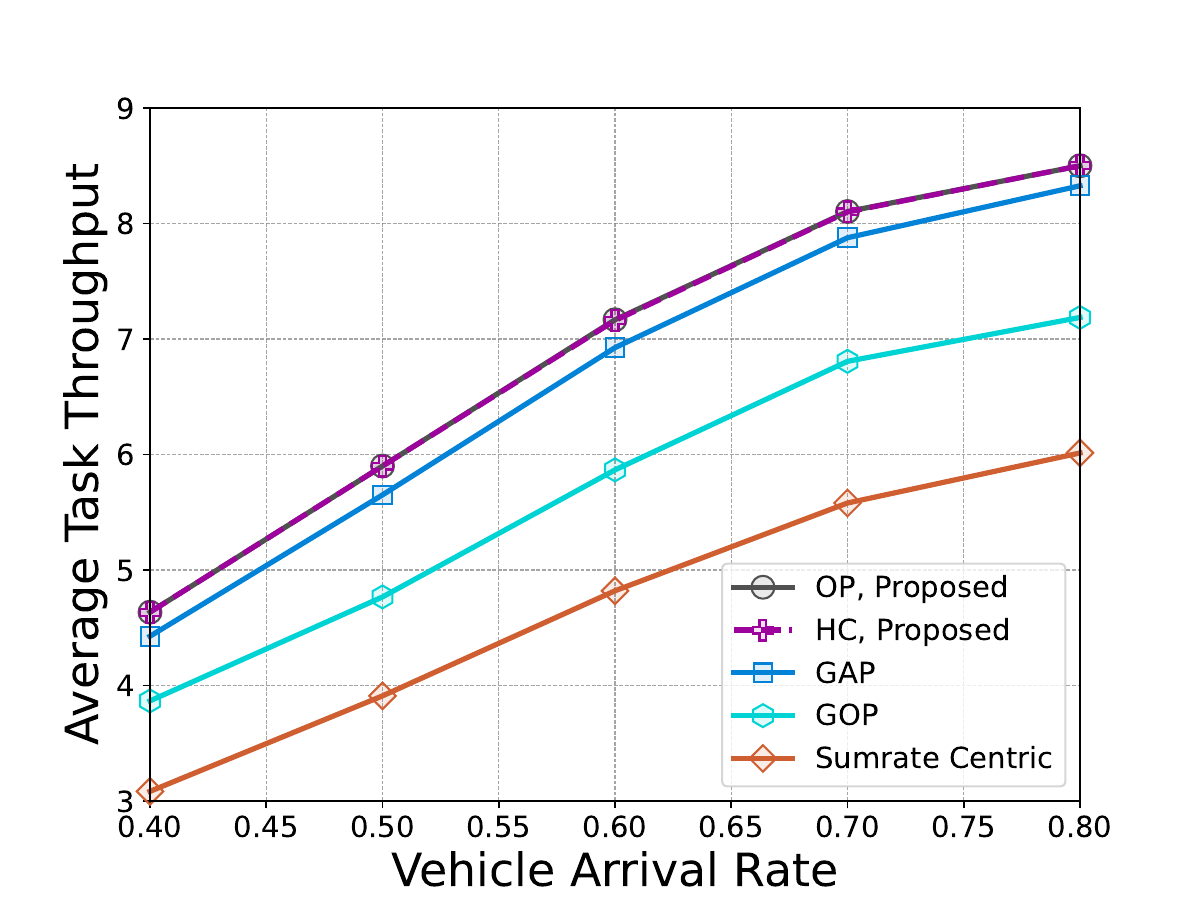}
    \caption{\rev{The number of VEC servers set to 6 in highway environments}.}
    \label{fig:density_prob5}
\end{subfigure}
\caption{Average task throughput versus the vehicle arrival rate \rev{in urban roads (a)-(b) and highway environments (c)-(d)}.}
\label{fig:task_density}
\vspace{-0.2 cm}
\end{figure*}

\begin{figure*}[t]
    \centering
\begin{subfigure}[b]{0.24\textwidth}
    \includegraphics[width=\textwidth]{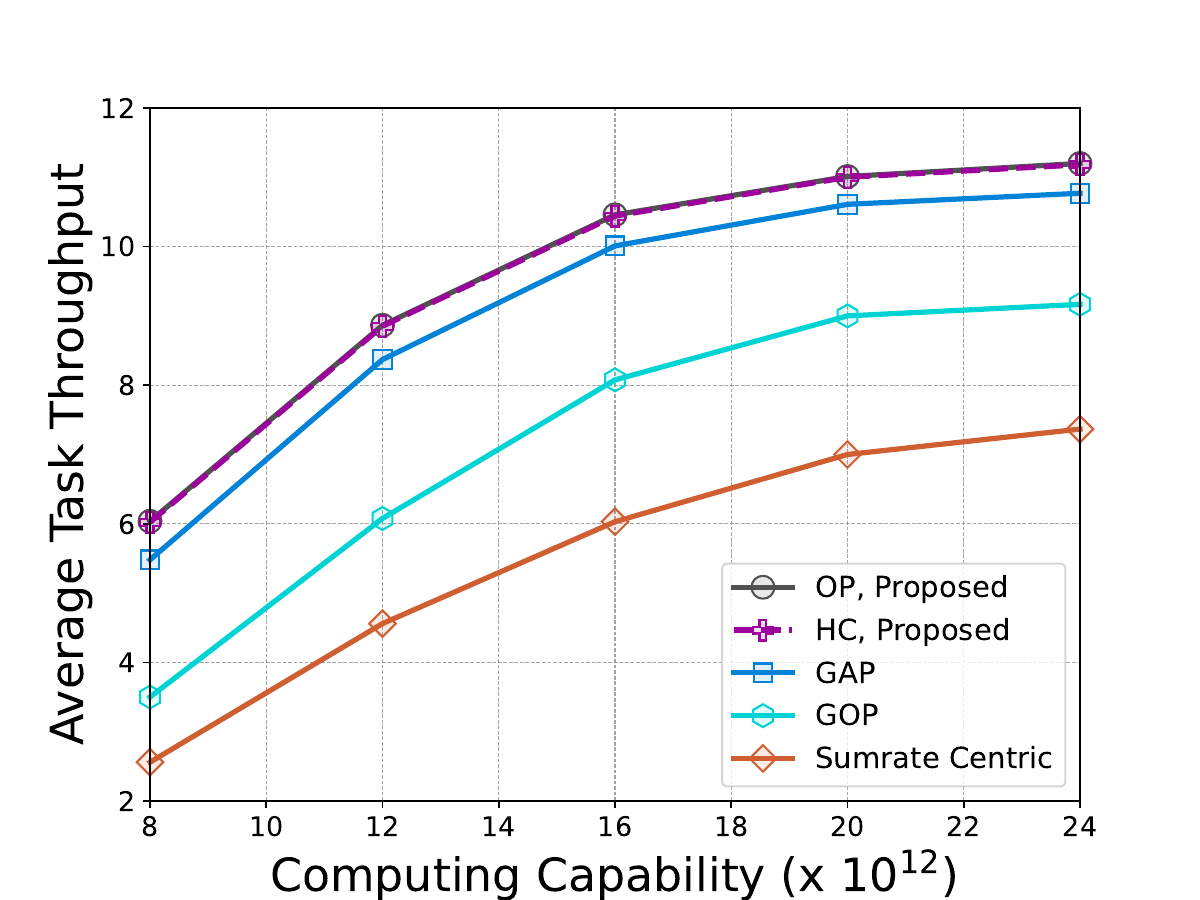}
    \caption{Vehicle arrival rate set to 0.7/s \rev{in urban roads}.}
    \label{fig:compt}
\end{subfigure}
    \hfill
\begin{subfigure}[b]{0.24\textwidth}
    \includegraphics[width=\textwidth]
    {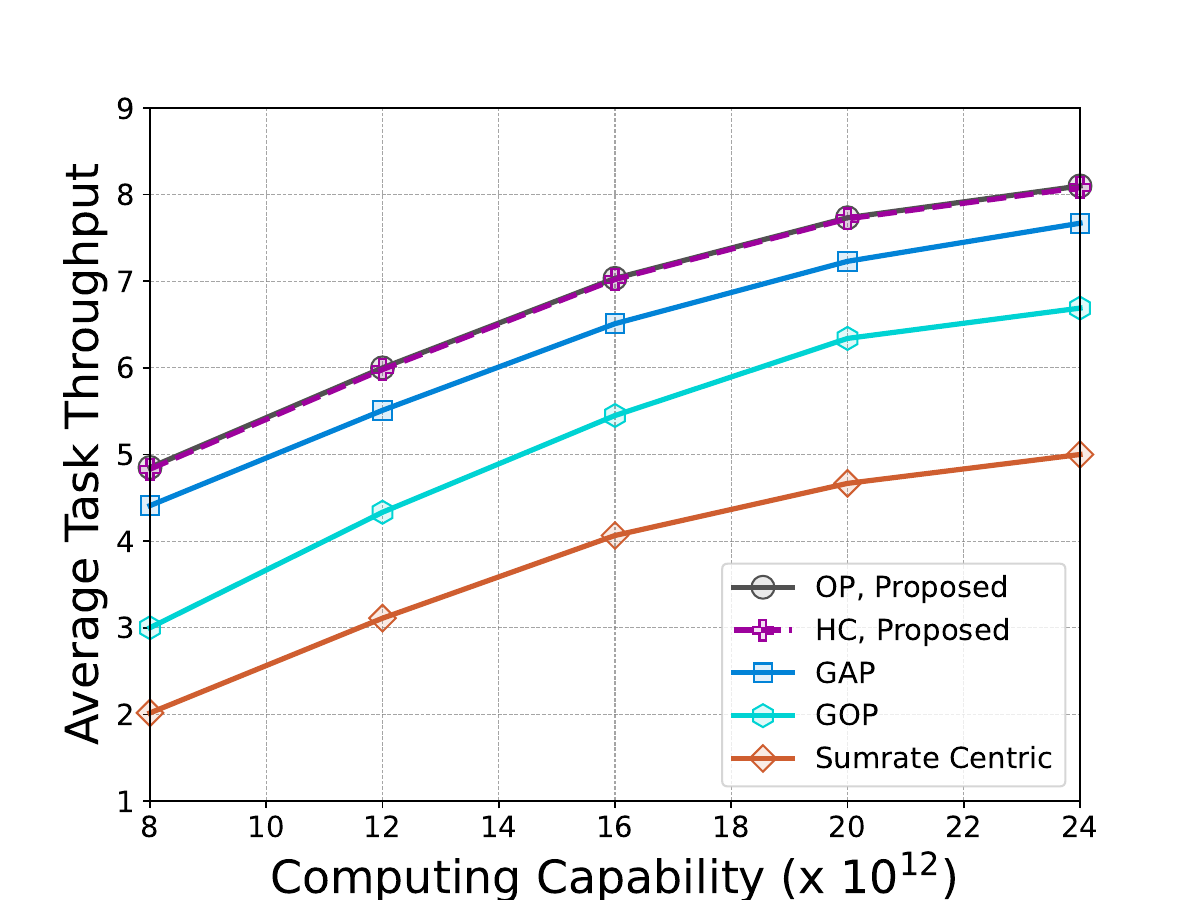}
    \caption{Vehicle arrival rate set to 0.5/s \rev{in urban roads}.}
    \label{fig:compt_prob5}
\end{subfigure}
    \hfill
\begin{subfigure}[b]{0.24\textwidth}
    \includegraphics[width=\textwidth]{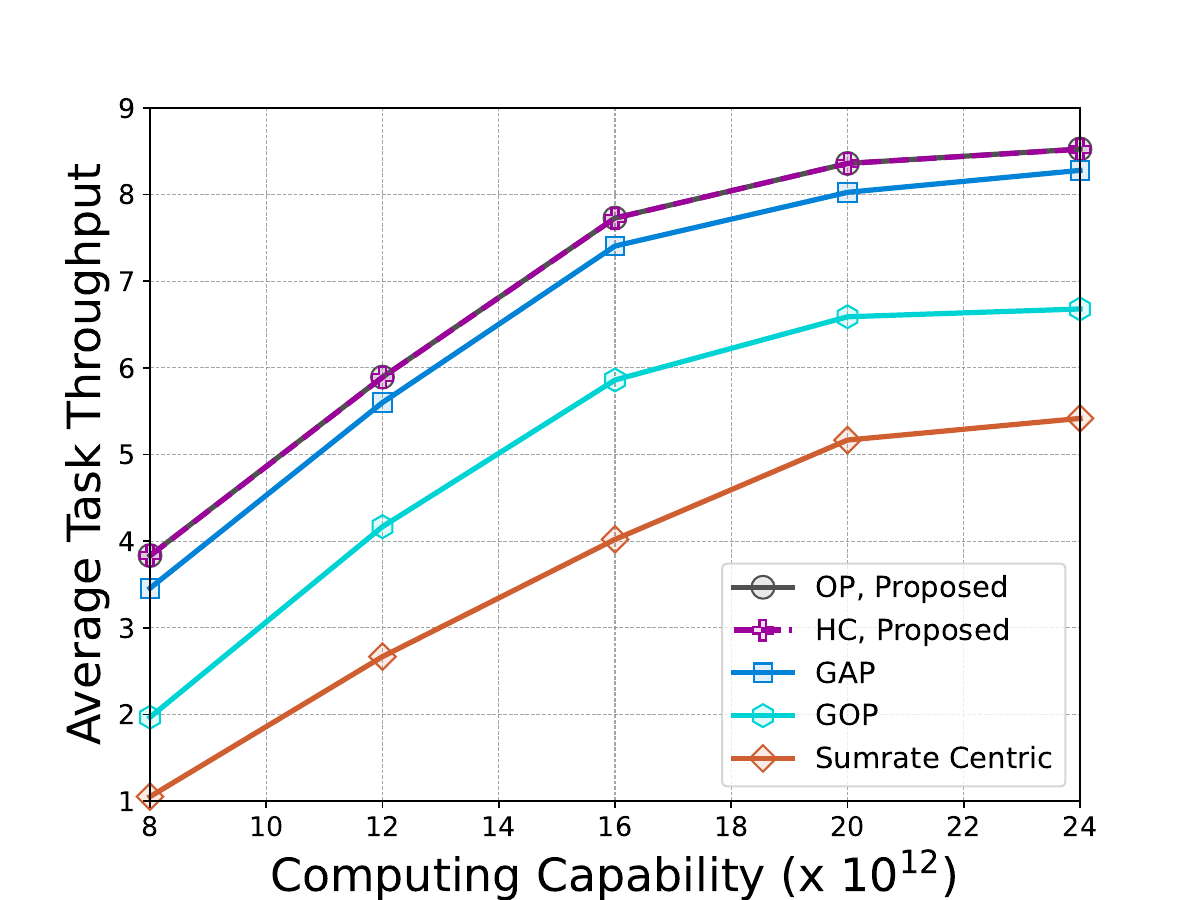}
    \caption{\rev{Vehicle arrival rate set to 0.7/s in highway environments}.}
    \label{fig:compt}
\end{subfigure}
    \hfill
\begin{subfigure}[b]{0.24\textwidth}
    \includegraphics[width=\textwidth]
    {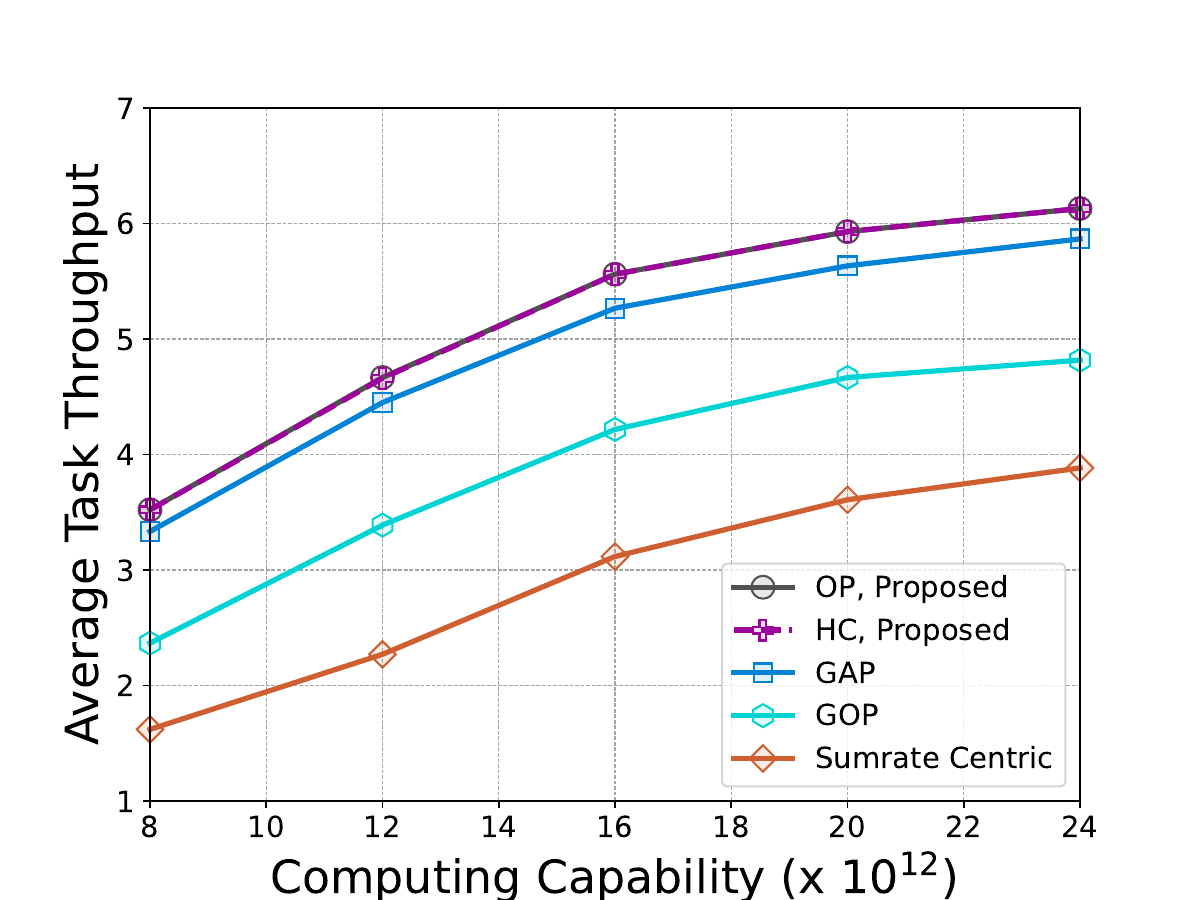}
    \caption{\rev{Vehicle arrival rate set to 0.5/s in highway environments}.}
    \label{fig:compt_prob5}
\end{subfigure}
\caption{Average task throughput versus the computing capability  \rev{in urban roads (a)-(b) and highway environments (c)-(d)}.}
\label{fig:task_compt}
\vspace{-0.2 cm}
\end{figure*}

\subsection{Numerical Results}
To evaluate the performance of our proposed algorithms, we consider the following benchmark methods:
\begin{itemize}
    \item Genetic-Algorithm-based Placement (GAP): In this approach, we consider a genetic algorithm-based heuristic for RIS placement based on the optimal task offloading obtained from Algorithm \ref{alg: Alg1}.
    \item Greedy-Offloading-based Placement (GOP): This algorithm offloads tasks by assigning each task to the nearest server while meeting the required task completion probability. If a server reaches its limit, this task is directed to the next available server. Based on the greedy task assignment, the RIS placement is optimized based on Algorithm \ref{alg: AlgOP} to maximize the average task throughput.
     \item \rev{Sumrate Centric \cite{Xiaowen}}: This approach optimizes the RIS placement to maximize the sum rate of all vehicles to all VEC servers. After determining RIS placement, the average task throughput is calculated accordingly. This scheme refers to traditional RIS placement to maximize system capacity instead of task throughput.
\end{itemize}

In Fig. \ref{fig:throughput_system}, we present the average task throughput in terms of the altitude and tilt angle of the RIS by varying the number of VEC servers. The visualized results exhibit a hill-like shape with one peak, serving as a justification for our HC-based algorithm. Specifically, the average task throughput initially increases with the RIS altitude and then decreases. This phenomenon occurs because a higher RIS improves the LoS probability and enables access to more servers, aligning with the principle of \textit{``standing higher, seeing farther"}. However, signal attenuation also increases considerably, negatively impacting task offloading performance. This underscores the importance of strategic RIS placement. Moreover, Fig. \ref{fig:throughput_system} also shows that as RIS altitude increases, the optimal tilt angle becomes steeper, tilting the RIS more toward the ground. This aligns with our intuition, as a higher RIS requires a sharper downward angle to maintain effective communication with ground-level vehicles. In addition, in Fig. \ref{fig:throughput_deployment_new} with four servers, the optimal RIS placement is (55 m, $69^\circ)$, while in Fig. \ref{fig:throughput_deployment_c3_new} with six servers, it is (62 m, $72^\circ)$. This finding suggests that the RIS should be deployed higher when there are more servers, as it can reach more available servers.

\begin{figure}[t]
    \centering
    \begin{subfigure}[b]{0.24\textwidth}
    \includegraphics[width=\textwidth]{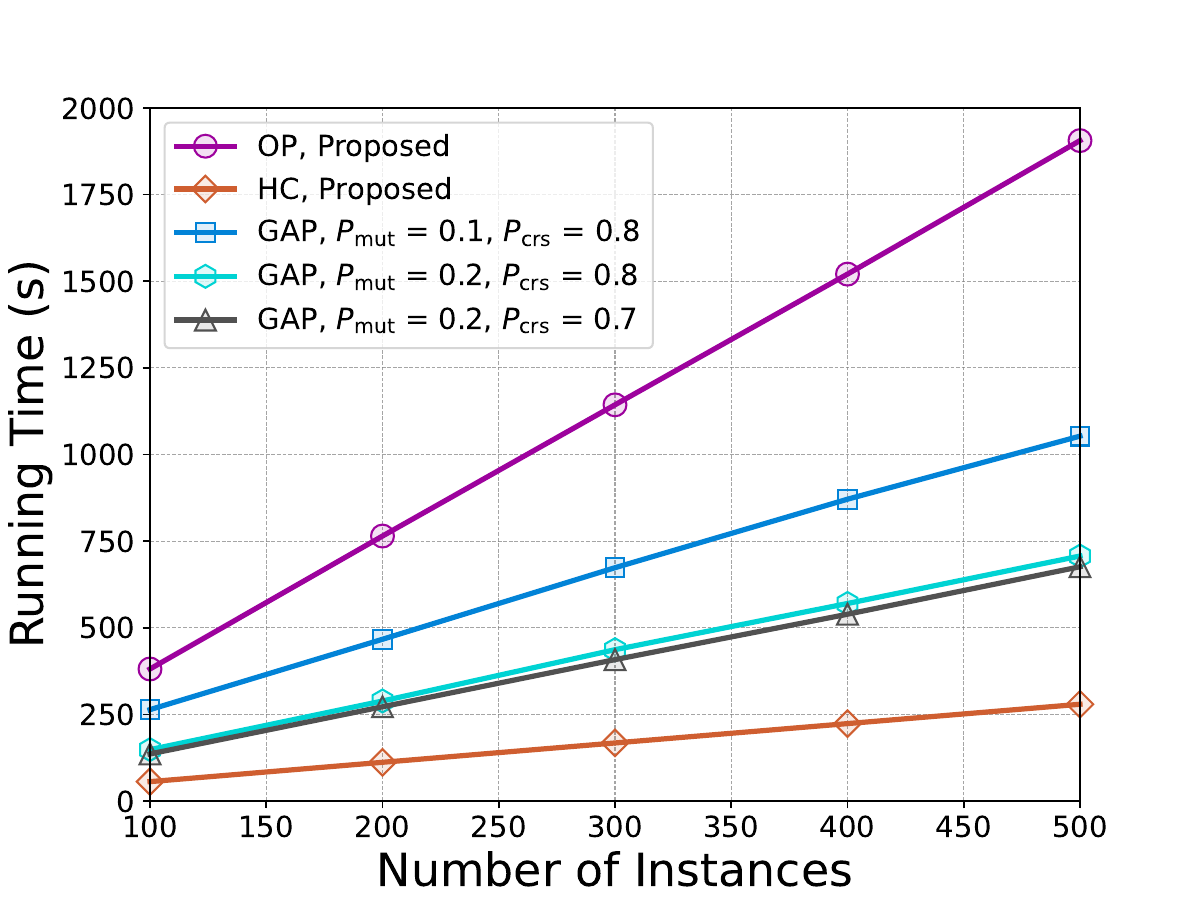}
    \caption{Running time versus the number of instances (the size of feasible set is 273).}
    \label{fig: time_ins}
\end{subfigure}
    \hfill
\begin{subfigure}[b]{0.24\textwidth}
    \includegraphics[width=\textwidth]{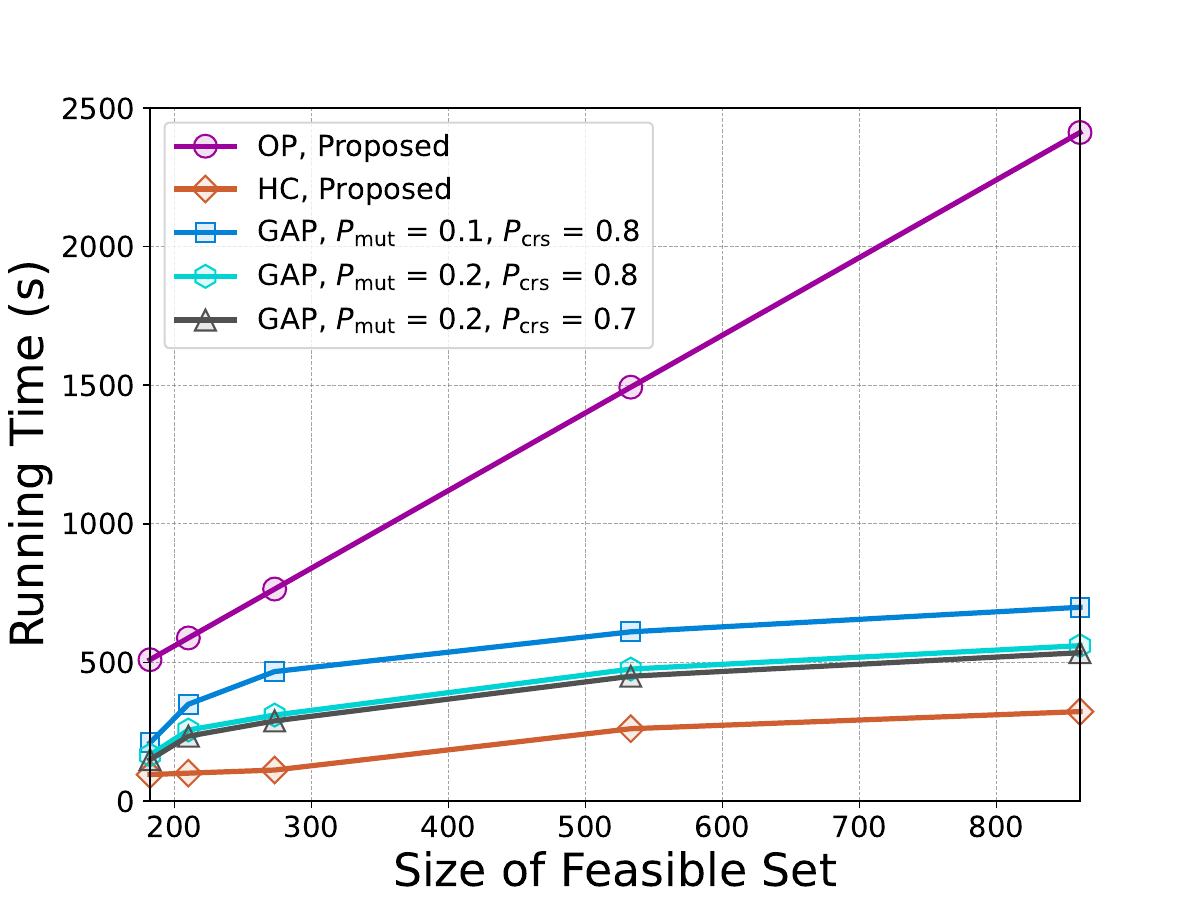}
    \caption{Running time versus the size of feasible set (the number of instances is 200).}
    \label{fig: time_size}
\end{subfigure}
    \caption{Running time of the proposed algorithms.}
    \label{fig: time}
    \vspace{-0.2 cm}
\end{figure}

Fig. \ref{fig:task_num_es} demonstrates the average task throughput versus the number of VEC servers, under different vehicle arrival rates. Impressively, the results show that the proposed HC algorithm achieves optimal performance in all of our simulation settings, i.e., matching the results found by optimal placement (OP). In comparison, the sumrate-centric algorithm performs the worst, as its RIS placement is optimized to maximize the sum rate from vehicles to servers rather than balancing both communication and computation loads with task deadline requirements. As the number of servers increases, average task throughput also rises accordingly, as more computing resources become available, which offers vehicles more opportunities to offload tasks. Notably, with 6 servers, the GOP algorithm struggles to efficiently manage the computing load since the task assignment is not optimal. This demonstrates the importance of jointly designing task offloading and RIS placement.

Fig. \ref{fig:task_Cmax} illustrates the average task throughput in relation to the number of supporting tasks per server, with results showing that our proposed framework consistently outperforms the benchmarks.
When the task capacity of individual servers is limited, effective task offloading within the VEC system becomes critical, and the proposed OP and HC algorithms demonstrate great advantages in these scenarios.
As the capacity for supporting tasks increases, average task throughput also improves, which is expected as each server can handle a greater number of computing tasks. Under these conditions, even the greedy assignment achieves relatively good performance.

The simulation results, presented in Fig. \ref{fig:task_density}, show the average task throughput relative to vehicle arrival rate across different numbers of servers. These results confirm that the proposed framework consistently achieves superior performance, particularly under high vehicle densities. By optimizing RIS placement, the proposed framework enables efficient connections between vehicles and potentially distant servers with lower computational loads, a balance that competing benchmarks struggle to achieve, especially with limited server availability.
Similarly, Fig. \ref{fig:task_compt} depicts the average task throughput versus server computing capability. A consistent trend emerges: as computing capability increases, so does the task throughput. However, once servers reach a certain level of computing power, system performance becomes constrained by communication limitations, reaching a steady state in task throughput.
\rev{The sumrate-centric benchmark consistently exhibits the worst performance, as its RIS placement strategy fails to balance communication and computation. These results underscore the importance of adopting throughput maximization as our metric in VEC systems.
}

\rev{To demonstrate the effectiveness of the proposed framework across various scenarios, we evaluate the performance in a challenging highway setting, where vehicle speeds range from 80 km/h to 130 km/h. As depicted in Fig. \ref{fig:task_num_es}-\ref{fig:task_compt}, our algorithm consistently outperforms alternative approaches in this scenario. These simulation results illustrate that the performance of VEC systems can be significantly improved through the joint optimization of RIS placement and task assignment, highlighting the robustness of our framework across various environments.
}

Finally, we demonstrate the efficiency of our proposed algorithms, as illustrated in Fig. \ref{fig: time}. Although the placement algorithms, i.e., the optimal placement and the hill climbing algorithms execute Algorithm \ref{alg: Alg1} in each step, implementing both algorithms is still practical because of the efficiency of Algorithm \ref{alg: Alg1}. In Fig. \ref{fig: time_ins}, the running time of the optimal placement algorithm increases linearly with both the number of instances and the feasible set. Other heuristic baselines are more scalable when the size of the feasible set increases since they can locate the solutions more swiftly. It is important to note that our proposed HC algorithm is much more efficient than both the OP approach and the GA-based algorithms, implying that our HC algorithm not only locates the optimal solutions in our experimental settings but also uses the shortest running time among the baselines.

\section{Conclusion}
\label{sec:conclusion}
In this paper, we have studied the RIS placement problem for multi-server vehicular edge computing (VEC) systems. Our goal is to strategically optimize the RIS placement, including its altitude and tilt angle, to balance the communication and computation workloads.
Considering the mobility of vehicles and wireless channels, we have developed a probabilistic channel model and a radiative gain model to characterize the vehicle-RIS-server channels as a function of the RIS positioning. 
To maximize average task throughput, we have developed efficient algorithms for optimizing RIS placement, including a grid search procedure and a hill-climbing (HC) algorithm, with each step executing our optimal task offloading optimization. Interestingly, we have discovered that the hill climbing algorithm can often find its optimal solution. We have conducted extensive experiments and demonstrated that our proposed RAISE framework significantly outperforms benchmark methods. We expect that our work offers an effective solution to harnessing more computing power and significantly boosting task throughput for future VEC systems.

 




\vfill

\end{document}